\newcommand{\br}{\mathbb R}
\newcommand{\nc}{\newcommand}
\newtheorem{lemma}{Lemma}[section]
\newtheorem{theorem}{Theorem}[section]
\newtheorem{problem}{Problem}[section]
\newtheorem{definition}{Definition}[section]
\newtheorem{proposition}{Proposition}[section]
\nc{\cD}{{\cal D}}
\nc{\cP}{{\cal P}}
\nc{\cR}{{\cal R}}
\nc{\e}{\varepsilon}
\nc{\Om}{\Omega}
\nc{\om}{\omega}
\nc{\aal}{\alpha}
\nc{\tow}{\rightharpoonup}
\nc{\nin}{\in \hs{-.35}/\,}
\nc{\np}{\newpage}
\nc{\g}{\gamma}
\nc{\IN}{I \hs{-.15} N}
\nc{\IR}{I \hs{-.14} R}
\nc{\IK}{I \hs{-.14} K}
\nc{\hs}[1]{\hspace{#1cm}}
\title{\bf On the existence of an exponential attractor for a planar shear flow with Tresca's friction condition}
\author{Grzegorz {\L}ukaszewicz 
    \thanks{E-mail : glukasz@mimuw.edu.pl, Tel.: +48 22 55 44 562}
\thanks{This research was supported by Polish Government Grant N N201 547638} }
\large\date{}}
\begin{document}
\maketitle
\begin{center}

 {\small
 University of Warsaw, Mathematics Department,
     ul.Banacha 2, 02-957 Warsaw, Poland}
\end{center}
\normalsize

\begin{abstract}
\noindent We consider a two-dimensional nonstationary Navier-Stokes shear flow with a sub\-differential boundary condition on a part of the boundary of the flow domain, namely, with a boundary driving subject to the Tresca law. There exists a unique global in time solution of the considered problem which is governed by a variational inequality.
Our aim is to prove the existence of a global attractor of a finite fractional dimension and of an exponential attractor for the associated semigroup. We use the method of $l$-trajectories. This research is motivated by a problem from lubrication
theory.
\end{abstract}

\vspace{0.2cm}

\noindent{\bf Keywords:} {lubrication theory, Navier-Stokes equation, global solution, exponential attractor}

\vspace{0.2cm} \noindent{\it 1991 Mathematics Subject
Classification:} 76D05, 76F10, 76F20 \vspace{0.2cm}
 \renewcommand{\theequation}{\arabic{section}.\arabic{equation}}
 \setcounter{equation}{0}
 \section{Introduction}
 
Remarking on future directions of research in the field of contact mechanics, in their recent book \cite{millor-sofonea-telega-2010}, 
the authors wrote: "The infinite-dimensional dynamical systems approach to contact problems is virtually nonexistent. (...) This topic certainly deserves further consideration". 

From the mathematical point of view a considerable difficulty in analysing problems of contact mechanics, and dynamical problems in particular, comes from the presence of involved boundary constraints which are often modelled by boundary conditions of a dissipative subdifferential type and lead to a formulation of the considered problem in terms of a variational or hemivariational inequality with, frequently, nondifferentiable boundary functionals. 

Our aim in this paper is to contribute to this topic by an examination of the large time behaviour of solutions of a problem coming from the theory of lubrication. 

We study the problem of existence of the global attractor of a finite fractal dimension and of an exponential attractor for a class of two-dimensional turbulent {\it boundary driven} flows subject to the Tresca law  which naturally appears in lubrication theory. Existence of such attractors strongly suggest that the time asymptotics of the considered flow can be described by a finite number of parameters and then treated numerically 
\cite{robinson-2011-dim, robinson-2001-infty}. We study the problem in its weak formulation given in terms of an evolutionary variational inequality with a nondifferentiable boundary functional. This situation produces an obstacle for applying directly the classical methods, presented e.g., in monographs 
\cite{robinson-2001-infty, chep-vish-2002, cholewa-dlotko 2000, hale-1988, temam-infty}, to prove that the fractal dimension of the global attractor is finite. Instead, we apply the powerful method of $l$-trajectories, introduced in \cite{malek-necas-1996, malek-prazak-2002} which
we use further to prove the existence of an exponential attractor. The method of $l$-trajectories helps to prove the existence of an exponential attractor for a considerably large class of nonlinear problems, in particular that with lack of good regularity properties 
(c.f., e.g., \cite{Feireisl-Prazak-2010, Miranville-Zelik-2008, segatti-zelik-2010} and references therein). 

The problem we consider is as follows. The flow of an incompressible fluid in a two-dimensional domain $\Omega$
is described by the equation of motion
\begin{equation} \label{e2.1}
 u_{t} -\nu\Delta u + (u\cdot\nabla) u +\nabla p = 0 \quad {\rm in} \quad \Omega
\end{equation}
\noindent and the incompressibility condition
\begin{equation} \label{e2.2}
{\rm div} \, u = 0 \quad {\rm in} \quad \Omega.
\end{equation}
To define the domain $\Omega$ of the flow, let $\Omega_{\infty}$ be the channel,
$$\Omega_{\infty} =\{x=(x_1,x_2): -\infty < x_1 < \infty,\,\,\,
          0 < x_2 < h(x_{1})\}, $$
where $h$ is a positive function,  smooth, and L-periodic in $x_1$.
Then we set
  $$\Omega =   \{x=(x_1,x_2): 0 < x_1 < L, \,\,\, 0 < x_2 < h(x_{1})\} $$
and
  $\partial\Omega=\bar{\Gamma}_{0}\cup\bar{\Gamma}_{L}\cup\bar{\Gamma}_{1}$,
where $\Gamma_{0}$ and $\Gamma_{1}$ are the bottom and the top,
and $\Gamma_{L}$ is the lateral part of the boundary of $\Omega$.

We are interested in solutions of (\ref{e2.1})-(\ref{e2.2})
in $\Omega$ which are L-periodic with respect to  $x_1$.
 We assume that
\begin{equation} \label{e2.3}
 u=0 \quad  {\rm at} \quad \Gamma_{1}.
 \end{equation}

\noindent Moreover, we assume that there is  no flux condition across $\Gamma_{0}$ so that
the normal component of the velocity on $\Gamma_{0}$  satisfies
\begin{equation} \label{e2.4}
u\cdot n = 0 \quad  {\rm at} \quad \Gamma_{0},
\end{equation}

\noindent and that the tangential component of the velocity $u_{\eta}$
on $\Gamma_{0}$ is unknown and satisfies the Tresca friction law with a constant and positive maximal
friction coefficient $k$. This means that, c.f., e.g., \cite{millor-sofonea-telega-2010, Duv72},

\begin{equation} \label{e2.5}
\left.
\begin{array}{l}
|\sigma_{\eta}(u , p)| \leq k \\ \\
|\sigma_{\eta}(u , p)|<k  \Rightarrow u_{\eta} =U_{0} e_{1}\\ \\
|\sigma_{\eta}(u , p)|=k  \Rightarrow \exists \lambda\geq 0 \mbox{ such that }
 u_{\eta}= U_{0} e_{1}-\lambda\sigma_{\eta}(u , p)
\end{array}
\right\}\quad {\rm at} \quad \Gamma_{0}
\end{equation}

\noindent where $\sigma_{\eta}$ is the tangential component of the stress tensor on $\Gamma_{0}$ and
$U_{0} e_{1}=(U_0, 0)$, $U_0\in \br$,
is the velocity of the lower surface producing the driving force of the flow.

If $n=(n_{1} ,  n_{2})$ is the unit outward normal to $\Gamma_0$,  and $\eta=(\eta_{1} ,  \eta_{2})$ is the unit
tangent vector to $\Gamma_0$ then we have
\begin{eqnarray}\label{eT}
      \sigma_{\eta}(u , p) = \sigma(u , p)\cdot n -
 ((\sigma(u , p)\cdot n)\cdot n ) n,
\end{eqnarray}
where
 $\sigma(u,p)=(\sigma_{ij}(u , p)) =( -p \delta_{ij}  +\nu \left(u_{i, j} + u_{j , i}\right))$
is the stress tensor.
\noindent Finally, the initial condition for the velocity field is
 \begin{eqnarray*} \label{e2.6}
  u(x , 0) = u_{0}(x)   \quad
  {\rm for} \quad x\in\Omega.
 \end{eqnarray*}

\noindent  The problem is motivated by the examination of a certain two-dimensional flow in an infinite
(rectified) journal bearing $\Omega\times(-\infty,+\infty)$, where
$\Gamma_1\times(-\infty,+\infty)$ represents the outer cylinder,
and $\Gamma_0\times(-\infty,+\infty)$ represents the inner,
rotating cylinder. In the lubrication problems the gap $h$ between
cylinders is never constant. We can assume that the rectification
does not change the equations as the gap between cylinders is very
small with respect to their radii.

The knowledge or the judicious choice of the boundary conditions on the fluid-solid interface is of particular interest in lubrication area which is concerned with thin film flow behaviour. The boundary conditions to be employed are determined by numerous physical parameters characterizing, for example, surface roughness and rheological properties of the fluid.

The widely used no-slip condition when the fluid has the same velocity as surrounding solid boundary is not respected if the shear rate becomes too high (no-slip condition is induced by chemical bounds between the lubricant and the surrounding surfaces and by the action of the normal stresses, which are linked to the pressure inside the flow; on the contrary, when tangential stressses are high they can destroy the chemical bounds and induce slip phenomenon). We can model such situation by a transposition of the well-known friction laws between two solids \cite{millor-sofonea-telega-2010} to the fluid-solid interface.

The system of equations (\ref{e2.1})-(\ref{e2.2}) with boundary conditions: (\ref{e2.3}) at $\Gamma_1$
for $h=const$ and $u=const$ on $\Gamma_{0}$, instead of
(\ref{e2.4})-(\ref{e2.5}),  was intensively studied in
several contexts, some of them mentioned in the introduction
of \cite{mb-gl-cambridge-2009}.
The autonomous case with $h\neq const$
and with $u=const$ on $\Gamma_{0}$
 was considered in \cite{BoLu1-04, BoLu2-04}.
 See also \cite{BoLuR-05} where the case $h\neq const$,
 $u=U(t)e_{1}$ on $\Gamma_{0}$, was considered.
The dynamical problem, important for applications, we consider
in this paper has been studied earlier in \cite{mb-gl-parabolic-2008} in the nonautonomous case for which the existence of a pullback attractor was established with the use of a method that, however, did not guarantee the finite dimensionality of the pullback attractor (or the global attractor in the reduced autonomous case).

To establish the existence of the global attractor of a finite fractal dimension we use the method 
of $l$-trajectories as presented in \cite{malek-prazak-2002}.
This method appears very useful when one deals with variational inequalities, cf., \cite{segatti-zelik-2010},
as it overcomes obstacles coming from the usual methods. One needs neither
compactness of the dynamics which results from the second energy
inequality nor asymptotic compactness, cf., i.e., \cite{temam-infty, BoLuR-05}, which results from the
energy equation. In the case of variational inequalities it is sometimes
not possible to get the second energy inequality and the differentiability of the associated semigroup due to
the presence of nondifferentiable boundary functionals. On the other hand, we do not
have an energy equation to prove the asymptotic compactness.

While there are other methods to establish the existence of the global attractor where the problem of the lack of regularity appears, that, e.g., based on the notion of the Kuratowski measure of noncompactness of bounded sets, where we do not need even the continuity of the semigroup associated with a given dynamical problem, cf., e.g., 
\cite{zhong-2006}, and also \cite{mb-gl-parabolic-2008}, where the nonautonomous version of the problem considered in this paper was studied, the problem of a finite dimensionality of the attractor is more involved, cf. also 
\cite{mb-gl-cambridge-2009}. 

The method of $l$-trajectories allows to prove the existence of an even more desirable object, called exponential attractor, for many problems for which there exists a finite dimensional global attractor \cite{Miranville-Zelik-2008}.  An exponential attractor is a compact subset of the phase space which is positively invariant, has finite fractal dimension, and attracts uniformly bounded sets at an exponential rate.
It contains the global attractor and thus its existence implies the finite dimensionality of the global attractor itself. Its crucial property is an exponential rate of attraction of solution trajectories \cite{Feireisl-Prazak-2010, Miranville-Zelik-2008}. The proof of the existence of an exponential attractor requires the solution to be regular enough to ensure the H\"older continuity of the semigroup in the time variable \cite{malek-prazak-2002}.
We establish this property by providing additional a priori estimates of solutions. 

Our plan is as follows.  In Section \ref{secvf} we homogenize first the boundary condition (\ref{e2.5}) by
a smooth background flow (a simple version of the Hopf construction, cf., e.g., \cite{mb-gl-parabolic-2008}) and then we present a variational formulation of the homogenized problem. In Section~\ref{existence-sol} we recall briefly the proof of the existence and uniqueness of a global in time solution of our problem and obtain some estimates of the solutions. Section \ref{l-trajectories} is devoted to a presentation of the main definitions and elements of the theory of infinite dimensional dynamical systems we use, in particular, of the method of $l$-trajectories.
In Section \ref{global-attractor} we prove the existence of the global attractor of a finite fractal dimension.
At last, in Section \ref{exponential} we prove the existence of an exponential attractor and in Section \ref{conclusions} we provide some final comments.

\renewcommand{\theequation}{\arabic{section}.\arabic{equation}}
\setcounter{equation}{0}
\section{Variational formulation of the problem}\label{secvf}
First, we homogenize the boundary condition (\ref{e2.5}). To this end let
 \begin{equation} \label{e2.7}
 u(x_{1}, x_{2} , t) = U(x_{2})e_1 + v(x_{1} , x_{2} , t)
 \end{equation}
 with
 \begin{equation} \label{e2.8}
 U(0)=U_0, \quad U(h(x_{1}))=0, \quad x_1\in (0 , L).
 \end{equation}

\noindent The new vector field $v$ is L-periodic in   $x_{1}$ and satisfies the equation of motion
\begin{equation} \label{e2.9}
      v_{t} -\nu\Delta v + (v\cdot\nabla) v  + \nabla p =  G(v)
\end{equation}
\noindent with
 \begin{equation*} \label{G}
      G(v)= - Uv,_{x_1} - (v)_2 \, U,_{x_2}e_1 +  \nu U,_{x_2 x_2}e_1 
 \end{equation*}
\noindent where by $(v)_2$ we denoted the second component of $v$.
As $ {\rm div}(U e_{1})=0$ we get
     \begin{equation} \label{e2.10}
 {\rm div} \, v = 0 \quad {\rm in} \quad \Omega.
    \end{equation}

\noindent From (\ref{e2.7})-(\ref{e2.8}) we obtain

\begin{equation} \label{e2.11}
  v = 0,      \quad {\rm on} \quad \Gamma_{1},
\end{equation}
and
\begin{equation} \label{e2.12}
  v\cdot n = 0,      \quad {\rm on} \quad \Gamma_{0}.
     \end{equation}
Moreover, we have,
\begin{equation*}
 \sigma_\eta(v,p)=\sigma_\eta(u,p) + (\nu\frac{\partial U(x_2)}{\partial x_2}|_{x_2=0}, 0).
\end{equation*}
Since we can define the extension $U$ in such a way that
      $$\frac{\partial U(x_2)}{\partial x_2}|_{x_2=0} = 0$$
\noindent the Tresca condition (\ref{e2.5}) transforms to
  \begin{equation} \label{e2.13}
\left.
\begin{array}{l}
       |\sigma_{\eta}(v , p)| \leq k \\ \\
       |\sigma_{\eta}(v , p)|<k  \Rightarrow v_{\eta} =0\\ \\
       |\sigma_{\eta}(v , p)|=k  \Rightarrow \exists \lambda\geq 0 \mbox{ such that }
 v_{\eta}= -\lambda\sigma_{\eta}(v , p)
\end{array}
\right\}\quad {\rm at} \quad \Gamma_{0}
\end{equation}
\noindent Finally, the initial condition becomes
 \begin{equation} \label{e2.14}
      v(x , 0)= v_{0}(x)= u_{0}(x) - U(x_{2})e_{1}.
 \end{equation}
The Tresca condition (\ref{e2.13}) is a particular case of an important in contact mechanics class of 
subdifferential boundary conditions of the form, cf., e.g. \cite{papa-1985},
\begin{eqnarray} \label{subdiff1}
     \varphi(\Theta) - \varphi(v) \geq -\sigma n (\Theta-v) \quad {\rm at} \,\,\,\, \Gamma_0,
\end{eqnarray}
where $\sigma n$ is the Cauchy stress vector and $\Theta$ belongs to a certain set of admissible functions. For 
    $\varphi(v) = k|v_\eta|$ the last condition is equivalent to (\ref{e2.13}).
    
Now we can introduce the variational formulation of the  homogenized problem  (\ref{e2.9})-(\ref{e2.14}). Then, for the convenience of the readers, we describe the relations between the classical and the weak formulations.

We begin with some basic definitions of the paper.

Let
  \begin{eqnarray*}
      \tilde{V}&=& \{v \in {\cal C}^{\infty}(\Omega)^{2}:\,\, {\rm div} \, v = 0 \,\,\mbox{ in }\,\,\Om,
      \,\, v \, \mbox{is L-periodic in }\, x_1 ,\,\,
      \nonumber\\
      &&\qquad  \,\,\, v=0 \, \mbox{ at }
      \Gamma_1, \quad v\cdot n= 0 \, \mbox{ at } \Gamma_0 \}
  \end{eqnarray*}
and
$$
V =   {\rm closure \,\,  of\,\, } \tilde{V} \,\, {\rm  in }  \,\, H^{1}(\Omega)^2, \qquad
H = {\rm  closure \,\,  of   \, } \tilde{V}  \,\, {\rm in }\,\,  L^{2}(\Omega)^2.
$$
\noindent We define scalar products in
   $H$
and
   $V$,
respectively, by
   $$(u \, ,\,   v) = \int_\Omega u(x)v(x)dx \quad
    \mbox{ and }\quad (\nabla u \, ,\,  \nabla v) $$
 and their associated norms by
  $$|v| = (v, v)^{\frac{1}{2}} \quad {\rm and} \quad
    \|v\| = (\nabla v \, ,\,  \nabla v)^{\frac{1}{2}}.$$
\noindent Let, for $u, v$ and $w$ in $V$
  $$a(u \, ,\,  v) = (\nabla u \, ,\,  \nabla v) \quad {\rm and}  \quad b(u \, ,\,  v \, ,\,  w) = ((u\cdot\nabla) v \, ,\,  w).$$
\noindent In the end, let us define the functional $j$ on $V$ by
\begin{eqnarray*}\label{e2.15}
 j(u) =\int_{\Gamma_{0}} k |u(x_{1} , 0) |dx_{1}.
 \end{eqnarray*}

\noindent The variational formulation of the homogenized problem
 {\rm (\ref{e2.9})-(\ref{e2.14})} is as follows.
\begin{problem} \label{pr.2.1} Given $v_0\in H$, find $v:(0,\infty) \to H$ such that:

\noindent
 (i) for all $T > 0$,
     $$v \in {\cal C}([0, T]; H)\cap L^{2}(0 , T ; V),
     \qquad with \quad v_{t}\in L^{2}(0 , T ; V') $$
where $V'$ is the dual space to $V$.

\noindent (ii) for all $\Theta$ in $V$, all $T>0$, and for almost all $t$ in the interval $[0, T]$, the following variational inequality holds
\begin{eqnarray} \label{eqn:er2.16}
     \langle v_{t}(t),\Theta-v(t) \rangle\, + \,\nu a(v(t), \Theta- v(t)) \,
            &+& \, b(v(t),v(t), \Theta - v(t)) \,
        \\ \nonumber \\
      &+& j(\Theta) - j( v(t)) \,\geq \, ({\cal L}(v(t)),\Theta- v(t)) \nonumber
\end{eqnarray}

\noindent (iii) the initial condition
\begin{eqnarray}\label{eqn:er2.17}
 v(x , 0) = v_{0} (x)
\end{eqnarray}
holds.

In (\ref{eqn:er2.16}) the functional ${\cal L}(v(t))$ is defined for almost all $t\geq 0$ by,
\begin{eqnarray*}\label{er2.4}
     ({\cal L}(v(t)), \Theta) = - \nu a(\xi ,\Theta) - b(\xi, v(t),\Theta) - b(v(t) , \xi , \Theta),
\end{eqnarray*}
\noindent where $\xi = Ue_{1}$ is a suitable smooth background flow.
\end{problem}
We have the following relations between classical and weak formulations.
 \begin{proposition}
Every classical solution of Problem {\rm (\ref{e2.9})-(\ref{e2.14})} is also a solution of 
Problem {\rm \ref{pr.2.1}}. On the other hand, every solution of Problem {\rm \ref{pr.2.1}} which is smooth enough 
is also a classical solution of Problem {\rm (\ref{e2.9})-(\ref{e2.14})}.
 \end{proposition}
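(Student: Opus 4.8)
The plan is to prove the two implications separately, both resting on the same Green's-formula computation that turns the momentum equation into the variational inequality, with the Tresca law (\ref{e2.13}) controlling the single surviving boundary integral on $\Gamma_0$.

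For the direction \emph{classical $\Rightarrow$ weak}, I would take a classical solution $v$ (with its pressure $p$) of (\ref{e2.9})--(\ref{e2.14}), fix an arbitrary $\Theta\in V$, and test the equation of motion (\ref{e2.9}) against $\Theta-v(t)$. Integrating over $\Omega$ and applying Green's formula to the viscous and pressure terms produces, besides the interior terms $\langle v_t,\Theta-v\rangle+\nu a(v,\Theta-v)+b(v,v,\Theta-v)$ and a right-hand side that I would rewrite as $(\mathcal L(v),\Theta-v)$ using the definitions of $G$ and $\xi=Ue_1$, a boundary integral $\int_{\partial\Omega}(\sigma n)\cdot(\Theta-v)\,ds$. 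The contributions on $\Gamma_L$ cancel by $L$-periodicity and those on $\Gamma_1$ vanish because $\Theta=v=0$ there; on $\Gamma_0$ the no-flux condition $\Theta\cdot n=v\cdot n=0$, valid since $\Theta,v\in V$, forces $\Theta-v$ to be tangential, so the normal part of $\sigma n$ drops and only $\int_{\Gamma_0}\sigma_\eta\cdot(\Theta_\eta-v_\eta)\,ds$ remains.

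The decisive point in both directions is an elementary pointwise consequence of the Tresca law along $\Gamma_0$: one has $\sigma_\eta\cdot v_\eta=-k|v_\eta|$, obtained by checking separately the two alternatives $|\sigma_\eta|<k$ (where $v_\eta=0$) and $|\sigma_\eta|=k$ (where $v_\eta=-\lambda\sigma_\eta$, $\lambda\ge0$), while $|\sigma_\eta|\le k$ together with Cauchy--Schwarz gives $\sigma_\eta\cdot\Theta_\eta\ge-k|\Theta_\eta|$. Subtracting and integrating yields $\int_{\Gamma_0}\sigma_\eta\cdot(\Theta_\eta-v_\eta)\,ds\ge j(v)-j(\Theta)$, and substituting this into the tested equation produces exactly the inequality (\ref{eqn:er2.16}). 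The regularity stipulated for a classical solution is what legitimizes the use of Green's formula and the trace of $\sigma_\eta$. For the converse, \emph{smooth weak $\Rightarrow$ classical}, I would first recover the pointwise momentum equation by restricting to test functions $\Theta=v\pm\phi$ with $\phi\in V$ compactly supported in $\Omega$: then $j(\Theta)=j(v)$ and the boundary integral vanishes, so (\ref{eqn:er2.16}) collapses to an equality for both signs, giving $\langle v_t,\phi\rangle+\nu a(v,\phi)+b(v,v,\phi)=(\mathcal L(v),\phi)$ for all such $\phi$; de Rham's lemma then restores the pressure $p$ and hence (\ref{e2.9}) and (\ref{e2.10}). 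Knowing the equation now holds in $\Omega$, I would integrate by parts in the opposite direction to peel off the boundary integral, reducing the full inequality to the boundary statement $\int_{\Gamma_0}\sigma_\eta\cdot(\Theta_\eta-v_\eta)\,ds+j(\Theta)-j(v)\ge0$ for every $\Theta\in V$. From here the Tresca conditions are extracted by the standard localization argument: choosing $\Theta\in V$ with tangential trace $v_\eta\pm\psi$ on $\Gamma_0$ gives $|\int_{\Gamma_0}\sigma_\eta\cdot\psi\,ds|\le\int_{\Gamma_0}k|\psi|\,ds$, whence $|\sigma_\eta|\le k$ a.e.; taking $\Theta=0$ gives $-\int_{\Gamma_0}\sigma_\eta\cdot v_\eta\,ds\ge j(v)$, which combined with $|\sigma_\eta|\le k$ forces $\sigma_\eta\cdot v_\eta=-k|v_\eta|$ pointwise, and this identity is equivalent to the two implications in (\ref{e2.13}).

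I expect the main obstacle to be the clean justification of the boundary analysis rather than the interior computation: one must correctly decompose $\sigma n$ into normal and tangential parts against the fixed no-flux constraint, ensure that the homogenization shift matches (recall $\sigma_\eta(v,p)$ and $\sigma_\eta(u,p)$ differ only by the term $\nu\,\partial_{x_2}U|_{x_2=0}$, which is arranged to vanish), and, in the converse direction, rigorously pass from the integral boundary inequality to the pointwise Tresca alternatives via localization on $\Gamma_0$, which requires the assumed extra smoothness to guarantee that $\sigma_\eta$ has a well-defined trace.
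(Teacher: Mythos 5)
Your proposal is correct and follows essentially the same route as the paper's proof: in the forward direction you test the momentum equation against $\Theta-v(t)$, reduce the boundary term to $\Gamma_0$ via the no-flux constraint, and use exactly the paper's two ingredients, namely that the Tresca law is equivalent to $\sigma_\eta\cdot v_\eta=-k|v_\eta|$ a.e.\ on $\Gamma_0$ and that $|\sigma_\eta|\le k$ gives $\sigma_\eta\cdot\Theta\ge -k|\Theta|$; in the converse direction you likewise recover the equation with test functions $\Theta=v\pm\varphi$ ($\varphi$ vanishing on the boundary) and restore the pressure by de Rham's lemma. The only difference is that you spell out the localization argument recovering the pointwise conditions (\ref{e2.13}) from the integral boundary inequality, a step the paper delegates to the reference \cite{bou07}.
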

\begin{proof}
Let $v$ be a classical solution of Problem (\ref{e2.9})-(\ref{e2.14}). As it is (by assumption)
 sufficiently regular, we have to check only (\ref{eqn:er2.16}).
 Remark first that (\ref{e2.9}) can be written as
\begin{eqnarray}\label{e2.9bis}
     v_{t}- {\rm Div}\,\sigma(v , p)+ (v\cdot\nabla) v = G(v(t)).
\end{eqnarray}
Let $\Theta \in V$. Multiplying (\ref{e2.9bis}) by $\Theta- v(t)$
and using Green's formula we obtain
\begin{eqnarray}\label{eq2.6}
\int_{\Omega}v_{t}(\Theta-v(t))dx
&+& \int_{\Omega}\sigma_{ij}(v , p) (\Theta- v(t))_{i, j}dx
+ b(v(t) \, , \,v(t) \, , \, \Theta-v(t))
\nonumber\\
&=&\int_{\partial\Omega} \sigma_{ij}(v , p)n_{j} (\Theta- v(t))_{i}
+ \int_{\Omega}G(v(t))(\Theta-v(t))dx 
\end{eqnarray}
for $t\in (0,T)$. As $v(t)$ and $\Theta$ are in $V$, after some calculations we obtain
\begin{eqnarray}\label{a1}
\int_{\Omega} \sigma_{ij}(v , p) (\Theta- v(t))_{i, j}dx
= \nu a(v(t) \, , \, \Theta- v(t)).
\end{eqnarray}
By (\ref{subdiff1}) with $\varphi(v) = k|v_\eta|$ and taking into account the boundary conditions we get
\begin{eqnarray}\label{eq2.8}
\int_{\partial\Omega} \sigma_{ij}(v , p)n_{j}(\Theta- v(t))_{i}
 \geq -\int_{\Gamma_{0}}k (|\Theta| - |v_{\eta}(t)|) 
\end{eqnarray}
\noindent Finally,
\begin{eqnarray} \label{eqn2.16a}
\int_{\Omega}G(v(t))(\Theta-v(t))dx = \,( {\cal L}(v(t)) \, , \, \Theta-v(t)).
\end{eqnarray}
\noindent From  (\ref{a1}), (\ref{eq2.8}) and (\ref{eqn2.16a}) we see that
(\ref{eq2.6}) yields (\ref{eqn:er2.16}), and (\ref{eqn:er2.17})
is the same as~(\ref{e2.14}).

\smallskip
\noindent  Conversely, suppose that $v$ is a sufficiently smooth solution to Problem {\rm \ref{pr.2.1}}.
We have immediately (\ref{e2.10})-(\ref{e2.12}) and (\ref{e2.14}).

Now, let $\varphi$ be in the space
$(H^{1}_{div}(\Omega))^{2}= \{\varphi\in  V : \varphi=0 \mbox{ on } \Gamma\}$. We 
take $\Theta= v(t) \pm \varphi$ in
(\ref{eqn:er2.16}) to get
\begin{eqnarray*}
\langle v_{t}(t) - \nu \Delta v(t)  + (v(t)\cdot \nabla) v(t) - G(v(t)) \, ,
\,  \varphi \rangle\, =0  \qquad \forall \varphi\in
(H^{1}_{div}(\Omega))^{2}.
\end{eqnarray*}
\noindent Thus, there exists a distribution $p(t)$ on $\Omega$ such that
\begin{eqnarray}\label{e2.9'}
v_{t}(t) - \nu \Delta v(t)  + (v(t)\cdot \nabla) v(t) - G(v(t))= \nabla p(t) \quad \mbox{in } \Omega
\end{eqnarray}
\noindent so that (\ref{e2.9}) holds. Now, we shall derive the Tresca boundary condition (\ref{e2.13})
from the weak formulation. We have
\begin{eqnarray} \label{Green3}
    \int_{\Omega} \sigma_{ij}(v , p) (\Theta- v(t))_{i, j}dx = - \int_{\Omega}{\rm Div}\sigma(v,p)(\Theta-v)dx +
    \int_{\partial\Omega}\sigma n(\Theta-v)d\Gamma.
\end{eqnarray}
Applying  (\ref{a1}) and (\ref{Green3}) to (\ref{eqn:er2.16}) we get
\begin{eqnarray*}
 \int_{\Omega} (v_{t}- {\rm Div}\,\sigma(v , p)+ (v\cdot\nabla) v - G(v(t)))(\Theta-v)dx -\int_{\partial\Omega}\sigma n(\Theta-v)d\Gamma
 \geq j(v) - j(\Theta)
\end{eqnarray*}
By (\ref{e2.9'}) we have (\ref{e2.9bis}) and so the first integral on the left hand side vanishes. Thus we obtain
condition (\ref{eq2.8}). As
\begin{eqnarray*}
\int_{\partial\Omega}
\sigma_{ij}(v , p)n_{j} (\Theta- v(t))_{i}
=\int_{\Gamma_{0}} \sigma_{\eta}(v , p)(\Theta- v_{\eta}(t))
+ \int_{\Gamma_{0}} (\sigma_{ij} n_{j}n_{i}) n_{i}(\Theta- v_{\eta}(t))_{i}
\end{eqnarray*}
and the last integral equals zero as $n_{i} (\Theta- v_{\eta}(t))_{i}=0$ on $\Gamma_{0}$, inequality (\ref{eq2.8}) can be written in the form
\begin{eqnarray} \label{tresca-int}
    \int_{\Gamma_{0}} \sigma_{\eta}(v , p)(\Theta- v_{\eta}(t)) \geq 
    -\int_{\Gamma_{0}}k (|\Theta| - |v_{\eta}(t)|), 
\end{eqnarray}
where $\Theta$ is any element of $V$. From (\ref{tresca-int}) we obtain the Tresca boundary condition (\ref{e2.13}) in an elementary way, observing that (\ref{tresca-int}) implies
\begin{eqnarray*}
    -\int_{\Gamma_{0}} \sigma_{\eta}v = \int_{\Gamma_{0}}k|v_\eta| \quad {\rm and} \quad 
    \big|\int_{\Gamma_{0}} \sigma_{\eta}\Theta \big|
    \leq \int_{\Gamma_{0}}k |\Theta|.
\end{eqnarray*}
\end{proof}

\renewcommand{\theequation}{\arabic{section}.\arabic{equation}}
\setcounter{equation}{0}
\section{Existence and uniqueness of a global in time solution} \label{existence-sol}

\noindent In this section we establish, following \cite{mb-gl-parabolic-2008}, the existence and uniqueness 
of a global in time solution for Problem \ref{pr.2.1}. First, we present two lemmas. 
\begin{lemma}\label{lemma3.1} (\cite{BoLuR-05})
There exists a smooth extension
\begin{eqnarray*} \label{eqn:er3.6}
        \xi(x_2) =  U(x_2)e_1
\end{eqnarray*}
of $U_0e_1$ from $\Gamma_0$ to $\Om$ satisfying: (\ref{e2.8}),
\begin{equation*}
  \frac{\partial U(x_2)}{\partial x_2}|_{x_2=0} = 0,
\end{equation*}
and such that
\begin{eqnarray*} \label{er3.4a}
    |b(v \,,\, \xi ,\, v)|\leq \frac{\nu}{4} \|v\|^{2}
      \quad {\rm for\,\,all}\,\,\,v\in V.
\end{eqnarray*}
Moreover,
\begin{equation*} \label{eqn:er3.24.0}
        |\xi|^{2} + |\nabla \xi|^{2} =  \int_{\Omega}|U(x_2)|^2dx_1dx_2 + \int_{\Omega}|U,_{x_2}(x_2)|^2dx_1dx_2
        \leq F,
\end{equation*}
where $F$ depends on $\nu, \Omega$, and $ U_0$.
\end{lemma}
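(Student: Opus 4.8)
The plan is to construct a one-dimensional Hopf-type profile $U$ concentrated in a thin boundary layer adjacent to $\Gamma_0$ and to reduce every assertion to elementary estimates in the single variable $x_2$. The decisive observation is that, because $\xi=U(x_2)e_1$ has vanishing second component and first component depending on $x_2$ only, the sole surviving derivative is $\partial_{x_2}\xi_1=U'(x_2)$, so that for all $v\in V$
\begin{equation*}
 b(v,\xi,v)=\int_\Om U'(x_2)\,v_1 v_2\,dx .
\end{equation*}
The factor multiplying $U'$ is the product $v_1v_2$, in which $v_2$ is precisely the component that vanishes on $\Gamma_0$: indeed $\Gamma_0=\{x_2=0\}$ has outward normal $n=(0,-1)$, so the condition $v\cdot n=0$ forces $v_2(x_1,0)=0$. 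This is what renders the layer near $x_2=0$ harmless and is the structural reason the estimate can be closed.

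First I would exploit the vanishing of $v_2$ on $\Gamma_0$ through a Hardy inequality in $x_2$. Writing $b(v,\xi,v)=\int_\Om\bigl(x_2U'(x_2)\bigr)\,v_1\,\frac{v_2}{x_2}\,dx$ and applying Cauchy--Schwarz gives
\begin{equation*}
 |b(v,\xi,v)|\le \Bigl(\sup_{x_2}x_2|U'(x_2)|\Bigr)\,|v_1|\,\Bigl(\int_\Om\frac{|v_2|^2}{x_2^2}\,dx\Bigr)^{1/2}.
\end{equation*}
For each fixed $x_1$ the one-dimensional Hardy inequality, valid because $v_2(x_1,0)=0$, yields $\int_0^{h(x_1)}|v_2|^2/x_2^2\,dx_2\le 4\int_0^{h(x_1)}|\partial_{x_2}v_2|^2\,dx_2$, so after integration in $x_1$ the last factor is bounded by $2\|v\|$. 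Together with the Poincar\'e inequality $|v_1|\le|v|\le C_\Om\|v\|$ (available since $v=0$ on $\Gamma_1$) this produces
\begin{equation*}
 |b(v,\xi,v)|\le 2C_\Om\Bigl(\sup_{x_2}x_2|U'(x_2)|\Bigr)\|v\|^2 .
\end{equation*}

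It then remains to choose $U$ so that the prefactor is at most $\nu/4$ while the three pointwise requirements hold. Fix $\delta\in(0,h_{\min})$ with $h_{\min}=\min_{x_1}h(x_1)>0$, and take $U$ equal to $U_0$ near $x_2=0$, equal to $0$ for $x_2\ge\delta$, and interpolating through a logarithmically graded transition on a sublayer $[\e,\delta]$, i.e. $U(x_2)=U_0\,\ln(\delta/x_2)/\ln(\delta/\e)$ there, finally mollified to a smooth function. The constant piece near the origin gives $U'(0)=0$; the vanishing of $U$ beyond $\delta<h_{\min}$ gives $U(h(x_1))=0$ for every $x_1$, which is the device coping with the non-constant gap $h$; and on the log layer $x_2|U'(x_2)|=|U_0|/\ln(\delta/\e)$, which tends to $0$ as $\e\to0$. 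Choosing $\e$ small enough that $2C_\Om|U_0|/\ln(\delta/\e)\le\nu/4$ secures the trilinear estimate. With $U$ now fixed, $\xi$ is a definite smooth field, and integrating $|U|^2$ and $|U'|^2$ over $\Om$ gives $|\xi|^2+|\nabla\xi|^2\le F$ with $F$ depending only on the selected $\e,\delta$, hence only on $\nu$, $\Om$ and $U_0$.

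I expect the main obstacle to be the construction of $U$ in the last step: one must simultaneously make $\sup_{x_2}x_2|U'(x_2)|$ arbitrarily small (which forces the logarithmic, rather than a linear, cutoff) and respect both $U'(0)=0$ and support in $x_2<h_{\min}$, the latter being exactly what permits a profile depending on $x_2$ alone to satisfy $U(h(x_1))=0$ over the whole variable top boundary. Once these constraints are reconciled, the Hardy and Poincar\'e steps are routine.
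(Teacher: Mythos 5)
Your proposal is correct, but note first that there is nothing in the paper to compare it against: the paper does not prove this lemma, it imports it verbatim from \cite{BoLuR-05}. So the relevant comparison is with the standard construction in that literature, and your argument is exactly that construction (Hopf boundary-layer profile plus a Hardy-type estimate). Your reduction $b(v,\xi,v)=\int_\Omega U'(x_2)\,v_1v_2\,dx$ is right, since the only nonzero entry of $\nabla\xi$ is $\partial_{x_2}\xi_1=U'$; the condition $v\cdot n=0$ on $\Gamma_0$ does give $v_2(x_1,0)=0$ in the trace sense, which legitimizes the one-dimensional Hardy inequality $\int_0^{h(x_1)}|v_2|^2x_2^{-2}\,dx_2\le 4\int_0^{h(x_1)}|\partial_{x_2}v_2|^2\,dx_2$ for a.e.\ $x_1$; combined with Poincar\'e (valid because $v=0$ on $\Gamma_1$) this yields $|b(v,\xi,v)|\le 2C_\Omega\bigl(\sup_{x_2}x_2|U'(x_2)|\bigr)\|v\|^2$, and the logarithmic profile supported in $[0,\delta]$ with $\delta<h_{\min}$ makes the prefactor $|U_0|/\ln(\delta/\varepsilon)$ as small as desired while preserving $U(0)=U_0$, $U'(0)=0$, $U(h(x_1))=0$, and a fixed bound $F(\nu,\Omega,U_0)$ on $|\xi|^2+|\nabla\xi|^2$. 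The only step that demands care is the one you yourself flag: the smoothing of the piecewise profile must not destroy the bound on $\sup_{x_2} x_2|U'(x_2)|$ nor the conditions at $x_2=0$; mollifying at a scale much smaller than $\varepsilon$ accomplishes this, or one can bypass mollification entirely by setting $U(x_2)=U_0\int_{x_2}^{\infty}\phi(s)\,ds$ with $\phi$ smooth, supported in $[\varepsilon,\delta]$, $\int\phi=1$ and $\phi(s)\le C/\bigl(s\ln(\delta/\varepsilon)\bigr)$. Incidentally, your $\xi$ is automatically divergence-free and $L$-periodic in $x_1$, which, though not stated in the lemma, is needed for its use as a background flow elsewhere in the paper.
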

\begin{lemma} (\cite{mb-gl-parabolic-2008}) \label{lemma3.3}
For all $v$ in $V$ we have  the Ladyzhenskaya inequality
\begin{eqnarray}\label{lady}
 \|v\|_{L^{4}(\Om)}\leq C(\Om)
  |v|^{\frac{1}{2}}\|v\|^{\frac{1}{2}}.
  \end{eqnarray}
\end{lemma}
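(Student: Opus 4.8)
The plan is to reduce the inequality on the bounded domain $\Om$ to the classical two-dimensional Ladyzhenskaya inequality on the whole plane and then to exploit the structure of the space $V$. On $\br^2$ the model inequality
$$\|w\|_{L^4(\br^2)}^2 \leq \sqrt{2}\,\|w\|_{L^2(\br^2)}\,\|\nabla w\|_{L^2(\br^2)}$$
is proved by the standard one-dimensional slicing argument: for a smooth compactly supported scalar $w$ one writes $w^2(x_1,x_2)=2\int_{-\infty}^{x_1} w\,\partial_{x_1}w\,dt$, bounds $\max_{x_1}w^2$ and $\max_{x_2}w^2$ by $2\int|w||\partial_{x_1}w|\,dx_1$ and $2\int|w||\partial_{x_2}w|\,dx_2$ respectively, estimates $\int\!\int w^4$ by the product of the two one-dimensional maxima, and finally applies Cauchy--Schwarz together with $ab\leq\frac12(a^2+b^2)$. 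I would first record this model inequality and then transplant it to $\Om$.

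To transplant it, I would use a Stein-type extension operator $E:H^1(\Om)\to H^1(\br^2)$ that is simultaneously bounded from $L^2(\Om)$ into $L^2(\br^2)$ and satisfies $\|\nabla(Ev)\|_{L^2(\br^2)}\leq C(\Om)\,(|v|+\|v\|)$; such an operator exists because $\Om$ is bounded with smooth (hence Lipschitz) boundary, $h$ being smooth and positive. Applying the whole-plane inequality to each scalar component of $Ev$ gives
$$\|v\|_{L^4(\Om)}^2\leq\|Ev\|_{L^4(\br^2)}^2\leq C(\Om)\,\|Ev\|_{L^2(\br^2)}\,\|\nabla(Ev)\|_{L^2(\br^2)}\leq C(\Om)\,|v|\,(|v|+\|v\|).$$
At this point I would invoke the Poincar\'e--Friedrichs inequality $|v|\leq C_P\|v\|$, valid on $V$ because every $v\in V$ vanishes on $\Gamma_1$, a portion of $\partial\Om$ of positive measure. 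This turns the last factor into $C\|v\|$ and yields $\|v\|_{L^4(\Om)}^2\leq C(\Om)\,|v|\,\|v\|$, whence the claim follows by taking square roots and adjusting the constant.

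The step I expect to be delicate is preserving the \emph{multiplicative} structure $|v|^{1/2}\|v\|^{1/2}$ on the bounded domain. A generic extension operator is bounded only as $H^1(\Om)\to H^1(\br^2)$ and therefore controls $\|Ev\|_{L^2(\br^2)}$ by the full norm $\|v\|_{H^1(\Om)}$, which destroys the separation between the $L^2$ and the gradient contributions; this is precisely why one needs an extension that is also $L^2$-bounded, together with the Poincar\'e inequality available on $V$. An alternative that avoids extension altogether is to run the slicing argument directly on $\Om$, using the vanishing of $v$ on $\Gamma_1$ for the vertical slices and $L$-periodicity (through an averaging trick over one period, again combined with Poincar\'e) for the horizontal slices; there the main nuisance is the variable height $h(x_1)$, which makes $\Om$ non-rectangular and complicates the passage from the two one-dimensional maxima to the product estimate for $\int_\Om |v|^4$.
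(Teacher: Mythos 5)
Your proof is correct, but it takes a genuinely different route from the paper's. You reduce the problem to the classical whole-plane inequality by means of a Stein-type extension operator that is simultaneously bounded from $L^2(\Om)$ to $L^2(\br^2)$ and from $H^1(\Om)$ to $H^1(\br^2)$, and then restore the multiplicative form $|v|^{1/2}\|v\|^{1/2}$ via the Poincar\'e inequality on $V$; your identification of the delicate point (not letting the extension mix the $L^2$ and gradient contributions) is exactly right, and the simultaneous boundedness of Stein's operator legitimately resolves it --- note only that $\partial\Om$ is not smooth (it has corners where $\Gamma_L$ meets $\Gamma_0$ and $\Gamma_1$), but it is Lipschitz, which is what Stein's theorem needs. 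The paper avoids any abstract extension theorem and instead exploits the structure of $V$ directly: since $v$ is $L$-periodic in $x_1$ it extends periodically to $-L<x_1<L$; multiplying by a cutoff $r$ with $r=1$ on $[0,L]$ and $r(-L)=0$ gives $\varphi=rv$ vanishing at $x_1=-L$; and since $v=0$ on $\Gamma_1$, $\varphi$ extends by zero to the bounding rectangle $\Om_1=(-L,L)\times(0,\max h)$. The slicing argument you recall for $\br^2$ is then run verbatim on $\Om_1$ (horizontal slices integrate from $x_1=-L$, vertical ones from $x_2=\max h$), and Poincar\'e enters only to absorb the commutator term $v\,r'$ in the bound $|\nabla\varphi|_{L^2(\Om_1)}\leq C\|v\|$. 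This is essentially the alternative you sketch at the end of your proposal; the nuisance you anticipate from the variable height $h(x_1)$ disappears precisely because the zero extension up to the rectangle is legitimate, $v$ vanishing on the top boundary. In terms of trade-offs, your argument is shorter modulo the black box and works on any Lipschitz domain where Poincar\'e is available, without using periodicity; the paper's is elementary and self-contained with explicit constants, at the price of relying on the specific periodicity and boundary conditions built into $V$.
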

 \begin{proof}
Let $v\in V$ and $r\in C^1((-L, L))$ such that $r= 1$ on $[0
\,,\, L]$ and $r=0$ at $x_1=-L$. Define $\varphi=r v$, and
extend $\varphi$ by $0$ to $\Om_{1}= (-L \, ,\,  L)\times (0\, , \,
h)$, where $h=\max_{0\leq x_{1}\leq L}h(x_{1})$. We obtain
 \begin{eqnarray*}
 \varphi^{2}(x_{1} , x_{2}) &=&
  2\int_{-L}^{x_{1}}  \varphi(t_{1} , x_{2})
  \frac{\partial \varphi}{\partial t_{1}}(t_{1} , x_{2}) d t_{1}
        \leq
    2\int_{-L}^{L}  |\varphi(x_{1} , x_{2})|
  \, |\frac{\partial \varphi}{\partial x_{1}}(x_{1} , x_{2})| d x_{1}
\end{eqnarray*}
and
 \begin{eqnarray*}
\varphi^{2}(x_{1} , x_{2})  = -
  2\int_{x_{2}}^{h}  \varphi(x_{1} , t_{2})
  \frac{\partial \varphi}{\partial t_{2}}(x_{1} , t_{2}) d t_{2}
 \leq
2\int_{0}^{h}  |\varphi(x_{1} , x_{2})|
  \, |\frac{\partial \varphi}{\partial x_{2}}(x_{1} , x_{2})| d
  x_{2},
\end{eqnarray*}
\noindent whence
\begin{eqnarray*}
\|\varphi\|^{4}_{L^{4}(\Om_{1})}&=&
\int_{\Om_{1}}\varphi^{2}(x_{1} , x_{2})\varphi^{2}(x_{1} , x_{2})dx_{1} dx_{2}
               \nonumber\\
   &\leq&
   \left(\int_{0}^{h}\sup_{-L\leq x_{1}\leq L}
    \varphi^{2}(x_{1} , x_{2}) dx_{2}\right)
   \left(\int_{-L}^{L}\,\,  \sup_{0\leq x_{2}\leq h}
    \varphi^{2}(x_{1} , x_{2})
 dx_{1}\right)
 \nonumber\\
 &\leq&
 4\left(\int_{0}^{h}\int_{-L}^{L}
   |\varphi|
  |\frac{\partial \varphi}{\partial x_{1}}| d x_{1}dx_{2}\right)
  \times
 \left(\int_{-L}^{L}\int_{0}^{h}
|\varphi|
  |\frac{\partial \varphi}{ \partial x_{2}}| d x_{2}d x_{1}\right).
 \end{eqnarray*}
 \noindent By the Cauchy-Schwartz inequality,
 \begin{eqnarray*}
\|\varphi\|^{4}_{L^{4}(\Om_{1})}&\leq&
 4 |\varphi|^{2}_{L^{2}(\Omega_{1})}
      |\frac{\partial \varphi}{ \partial x_{1}}|_{L^{2}(\Omega_{1})}
     |\frac{\partial \varphi}{\partial x_{2}}|_{L^{2}(\Omega_{1})}
 \nonumber\\
&\leq&
 2 |\varphi|^{2}_{L^{2}(\Omega_{1})}
\left(
  |\frac{\partial \varphi}{ \partial x_{1}}|^{2}_{L^{2}(\Omega_{1})}
+ |\frac{\partial \varphi}{\partial x_{2}}|^{2}_{L^{2}(\Omega_{1})}
 \right)
 \nonumber\\
&\leq&2 |\varphi|^{2}_{L^{2}(\Omega_{1})} |\nabla\varphi|^{2}_{L^{2}(\Omega_{1})}.
\end{eqnarray*}
 We use $|r|\leq 1$ and the Poincar\'e inequality to get
$$
\|v\|_{L^{4}(\Om)} \leq \|\varphi\|_{L^{4}(\Om_{1})},
\quad
|\varphi|_{L^{2}(\Omega_{1})}\leq 2|v|_{L^{2}(\Omega)}
     \quad \mbox{ and }\quad
|\nabla\varphi|_{L^{2}(\Omega_{1})}\leq C\|v\|_{V}$$ for some
constant $C$, whence (\ref{lady}) holds.
 \end{proof}

\begin{theorem}\label{thm3.1}
For any $v_{0}\in H$ and $U_0\in\br$ there exists a solution of Problem \ref{pr.2.1}.
\end{theorem}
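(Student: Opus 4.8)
The plan is to prove existence by regularizing the non-differentiable functional $j$, solving the resulting smooth problem by a Galerkin scheme, and passing to the limit. Concretely, for $\delta>0$ I would replace $j$ by the convex $C^1$ approximation $j_\delta(u)=\int_{\Gamma_{0}} k\sqrt{|u(x_1,0)|^2+\delta^2}\,dx_1$, whose Gateaux derivative $j_\delta'$ is monotone and bounded, with $j_\delta'(0)=0$ and $|j_\delta(u)-j(u)|\le k\delta|\Gamma_{0}|$ uniformly in $u$. The variational inequality (\ref{eqn:er2.16}) is then approximated by the variational \emph{equation}
\[
\langle v_t,w\rangle+\nu a(v,w)+b(v,v,w)+\langle j_\delta'(v),w\rangle=({\cal L}(v),w)\qquad\forall\,w\in V,
\]
which I would solve by the standard Galerkin method: take a basis of $V$ of smooth divergence-free fields, project onto the span of the first $m$, solve the resulting finite system of ODEs (locally solvable since all nonlinearities are continuous), and extend the local solution $v_m^\delta$ to $[0,T]$ by means of the uniform bound obtained next.

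Testing the Galerkin equation with $v_m^\delta$ itself yields the basic energy identity. Here $b(v_m^\delta,v_m^\delta,v_m^\delta)=0$, the driving term is controlled through ${\cal L}$ using the estimate $|b(v,\xi,v)|\le\frac{\nu}{4}\|v\|^2$ and the bound $|\xi|^2+|\nabla\xi|^2\le F$ of Lemma~\ref{lemma3.1}, and $\langle j_\delta'(v_m^\delta),v_m^\delta\rangle\ge0$ by convexity. After absorbing the $\frac{\nu}{4}\|v\|^2$ term and applying the Poincar\'e inequality, Gronwall's lemma gives a bound on $v_m^\delta$ in ${\cal C}([0,T];H)\cap L^{2}(0,T;V)$ uniform in both $m$ and $\delta$. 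To control the time derivative I would estimate $\langle v_{m,t}^\delta,w\rangle$ termwise: the viscous and forcing terms are linear, the convective term is handled in two space dimensions by the Ladyzhenskaya inequality (\ref{lady}) of Lemma~\ref{lemma3.3}, which gives $\|(v\cdot\nabla)v\|_{V'}\le C|v|\,\|v\|$, and the boundary term by $|\langle j_\delta'(v),w\rangle|\le k|\Gamma_{0}|^{1/2}\|w\|_{L^2(\Gamma_{0})}$ together with the trace inequality. Hence $v_{m,t}^\delta$ is bounded in $L^{2}(0,T;V')$ uniformly, and the Aubin--Lions--Simon lemma provides a subsequence converging strongly in $L^{2}(0,T;H)$, weakly in $L^{2}(0,T;V)$, with time derivatives converging weakly in $L^{2}(0,T;V')$.

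I would then pass to the limit in two stages. Letting $m\to\infty$ with $\delta$ fixed, the only delicate point is the convective term, which passes to the limit by the strong $L^{2}(0,T;H)$ convergence, producing a solution $v_\delta$ of the regularized equation. To recover the inequality I would exploit convexity: testing with $w=\Theta-v_\delta$ and using the subgradient inequality $\langle j_\delta'(v_\delta),\Theta-v_\delta\rangle\le j_\delta(\Theta)-j_\delta(v_\delta)$ converts the equation into an inequality of the form (\ref{eqn:er2.16}) with $j_\delta$ in place of $j$. Finally I would send $\delta\to0$. The genuinely hard part is the term $\langle v_{\delta,t},\Theta-v_\delta\rangle$, whose part $-\langle v_{\delta,t},v_\delta\rangle=-\tfrac12\frac{d}{dt}|v_\delta|^2$ is \emph{not} weakly continuous. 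The standard remedy is to integrate the inequality in time against a nonnegative weight and to use the weak lower semicontinuity of $v\mapsto\int|v|^2$, of $v\mapsto\int\|v\|^2$, and of the convex functional $j$ (continuous on $L^{2}(\Gamma_{0})$, hence weakly lower semicontinuous along the trace convergence), together with $j_\delta\to j$ uniformly; each limit then has the correct sign and the integrated inequality survives. A Lebesgue-point argument recovers the pointwise-in-$t$ inequality (\ref{eqn:er2.16}) for almost every $t$ and every $\Theta\in V$, the regularity required in part (i) of Problem~\ref{pr.2.1} follows from the uniform bounds, and the initial condition (\ref{eqn:er2.17}) is preserved since $v_\delta(0)\to v_0$ in $H$. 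This establishes the asserted existence.
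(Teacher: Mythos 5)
Your proof is correct and follows essentially the same route as the paper: regularize the nondifferentiable friction functional $j$ by a smooth convex $j_\delta$, solve the regularized variational \emph{equation} by Galerkin approximation with $\delta$-independent energy and time-derivative estimates, extract limits by Aubin--Lions compactness, and recover the variational inequality from the subgradient inequality for $j_\delta$. The only differences are minor: the paper uses the power-type regularizer $j_{\delta}(\varphi)=\frac{1}{1+\delta}\int_{\Gamma_{0}} k|\varphi|^{1+\delta}\,dx_1$ (pointwise convergence plus a Mosco-type $\liminf$ property) where you use $k\sqrt{|\cdot|^{2}+\delta^{2}}$ (uniform convergence, which slightly streamlines the limit), and the paper outsources the $\delta\to 0$ passage to Duvaut--Lions whereas you spell it out.
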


\begin{proof}
We provide only the main steps of the proof as it is quite standard and, on the other hand, long. The estimates we obtain will be used further in the paper.

Observe that the functional $j$ is convex, lower semicontinuous but nondifferentiable. To overcome this difficulty we use the following approach 
(see, i.e.,  \cite{Duv72}, \cite{papa-1985}, \cite{haslinger}).
For
   $\delta>0$
let
   $j_{\delta} : V\to \br$
be a  functional defined by
   $$ \varphi \mapsto j_{\delta}(\varphi) = \frac{1}{1+\delta}\int_{\Gamma_{0}} k|\varphi|^{1+\delta} dx$$
\noindent  which  is convex, lower semicontinuous and finite on $V$.
 Moreover, for
 $v_{\delta}\tow v$ in $L^2(0,T; V)$, 
   $$\liminf_{\delta\to 0^{+}}\int_0^T j_{\delta}(v_{\delta}(t))dt \geq \int_0^T j(v(t))dt$$
and
    $$ \lim_{\delta\to 0^{+}} j_{\delta}(\varphi)= j(\varphi)$$
for all $\varphi\in V$.
The functional $j_{\delta}$ is G\^ateaux differentiable in $V$, with
   $$(j'_{\delta}(v) \, , \,\Theta ) = \int_{\Gamma_{0}}k|v|^{\delta -1}\, v \, \Theta \,dx_1, 
        \quad \Theta \in V. $$
\noindent Let us consider the following equation
\begin{eqnarray}\label{e32}
     (\frac{d v_{\delta}(t)}{dt} ,\Theta) + \nu a(v_{\delta}(t), \Theta) &+&
      b(v_{\delta}(t), v_{\delta}(t),\Theta)
      + (j'_{\delta}(v_{\delta}(t)), \Theta ) \nonumber\\
     &=&
     - \nu a(\xi, \Theta)
     -  b(\xi , v_{\delta}(t) , \Theta)- b(v_{\delta}(t) ,  \xi , \Theta)
\end{eqnarray}
\noindent with initial condition
\begin{eqnarray}\label{e320}
v_{\delta}(0)= v_{0}.
\end{eqnarray}

\noindent For $\delta >0$, we  establish an a priori estimates of $v_{\delta}$.
Since  $(j'_{\delta}(v_{\delta})  ,  v_{\delta})\geq 0$, $v_{\delta}\in V$, and
    $b(v_{\delta},v_{\delta} , v_{\delta})=
    b(\xi ,{v_{\delta}} , {v_{\delta}})= 0$
then taking $\Theta=v_{\delta}(t)$ in (\ref{e32}) we get
\begin{eqnarray*} \label{er3.3}
    \frac{1}{2}\frac{d}{dt}|{v_{\delta}(t)}|^{2}+\nu \|v_{\delta}(t)\|^{2}
     \leq -\nu a(\xi  ,  v_{\delta}(t)) - b(v_{\delta}(t) , \xi , v_{\delta}(t))
\end{eqnarray*}

\noindent In view of Lemma \ref{lemma3.1} we obtain 
\begin{equation*} \label{eqn:er3.23.0}
     \frac{1}{2}\frac{d}{dt}|{v_{\delta}(t)}|^{2} + \frac{\nu}{2}\|{v_{\delta}(t)}\|^{2} \leq \nu\|{\xi}\|^2.
\end{equation*}
\noindent We estimate the right hand side in terms of the data using Lemma \ref{lemma3.1} 
to get
\begin{equation} \label{eqn:er3.30.0}
     \frac{1}{2}\frac{d}{dt}|{v_{\delta}(t)}|^{2} + \frac{\nu}{2}\|{v_{\delta}(t)}\|^{2} \leq F.
\end{equation}
with $F=F(\nu,\Omega,U_0)$.
From (\ref{eqn:er3.30.0})  we conclude that
\begin{equation} \label{er.3.19}
     |{v_{\delta}(t)}|^{2} + \nu\int_{0}^{t}\|{v_{\delta}(s)}\|^{2} ds \leq |v(0)|^{2} + 2tF,
\end{equation}
whence
 \begin{equation}\label{319}
      v_{\delta} \mbox{ is bounded in } L^{2}(0 , T ; V)\cap L^{\infty}(0, T ; H), 
      \mbox{  independently of } \delta.
 \end{equation}
\noindent  The existence of $v_{\delta}$ satisfying (\ref{e32})-(\ref{e320})
is based on  inequality (\ref{eqn:er3.30.0}), the  Galerkin approximations, and the compactness method.
Moreover, from (\ref{er.3.19}) we can deduce that
\begin{eqnarray}\label{v'}
    \dfrac{d v_{\delta}}{dt} \quad \mbox{ is  bounded in } L^{2}(0 , T ; V').
\end{eqnarray}

\noindent From (\ref{319}) and (\ref{v'}) we conclude that there exists $v$ such that
(possibly for a subsequence)
\begin{eqnarray}\label{3.19}
    v_{\delta}\tow v \quad \mbox{ in } \quad L^{2}(0 , T ; V),\quad {\rm and}  \quad
    \dfrac{d v_{\delta}}{dt}\tow  \dfrac{d v}{dt}\quad {\rm in} \quad L^{2}(0 , T ; V').
\end{eqnarray}
  \noindent  In view of (\ref{3.19}), $v\in {\cal C}([0 \, T] ; H)$, and
\begin{eqnarray*}\label{cf}
     v_{\delta}\to v \quad \mbox{ in } \quad L^{2}(0 , T ; H) \quad \mbox{ strongly.}
\end{eqnarray*}

 \noindent We can now pass to the limit $\delta \to 0$ in  (\ref{e32})-((\ref{e320}) as in \cite{Duv72}
to obtain  the variational inequality (\ref{eqn:er2.16}) for almost every $t\in (0 , T)$.
Thus the existence of a solution of Problem~\ref{pr.2.1} is established.
\end{proof}

\begin{theorem}\label{thm3.2}
Under the hypotheses of Theorem \ref{thm3.1}, the solution $v$ of Problem \ref{pr.2.1} is unique and the map $v(\tau) \to v(t)$, for $t>\tau \geq 0$, is Lipschitz continuous in $H$.
\end{theorem}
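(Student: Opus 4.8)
The plan is the standard energy method for variational inequalities, the only real care being the nondifferentiable functional $j$. I take two solutions $v_1,v_2$ of Problem \ref{pr.2.1} on $[\tau,T]$, write the inequality (\ref{eqn:er2.16}) for $v_1$ with test function $\Theta=v_2(t)$ and for $v_2$ with $\Theta=v_1(t)$, and add. The crucial observation is that the boundary terms then combine as $j(v_2)-j(v_1)+j(v_1)-j(v_2)=0$, so $j$ drops out completely; this is exactly what lets one bypass the nonsmoothness. Writing $w=v_1-v_2$, and using that $w\in L^2(\tau,T;V)$ with $w_t\in L^2(\tau,T;V')$ so that $t\mapsto |w(t)|^2$ is absolutely continuous with $\frac{d}{dt}|w|^2=2\langle w_t,w\rangle$, the time-derivative and viscous terms yield $-\frac12\frac{d}{dt}|w|^2$ and $-\nu\|w\|^2$ respectively.

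It then remains to reorganize the convective and ${\cal L}$ contributions by the algebra of $b$. Using $b(u,z,z)=0$ for $u,z\in V$ (a consequence of $\mathrm{div}\,u=0$ together with the boundary conditions defining $V$), substituting $v_1=v_2+w$ and cancelling shows that the surviving convective term is $-b(w,v_2,w)$, while the difference ${\cal L}(v_1)-{\cal L}(v_2)$ tested against $v_2-v_1$ reduces — once the $\nu a(\xi,\cdot)$ parts cancel and $b(\xi,w,w)=0$ — to $b(w,\xi,w)$. Collecting terms one obtains, for almost every $t$,
$$\frac12\frac{d}{dt}|w|^2+\nu\|w\|^2\le |b(w,v_2,w)|+|b(w,\xi,w)|.$$

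The right-hand side is controlled by the two lemmas already at our disposal. Lemma \ref{lemma3.1} gives $|b(w,\xi,w)|\le\frac{\nu}{4}\|w\|^2$, which is absorbed on the left. For the other term the Ladyzhenskaya inequality (Lemma \ref{lemma3.3}) yields $|b(w,v_2,w)|\le\|w\|_{L^4(\Om)}^2\|v_2\|\le C(\Om)^2|w|\,\|w\|\,\|v_2\|$, and Young's inequality splits this as $\frac{\nu}{4}\|w\|^2+\frac{C(\Om)^4}{\nu}\|v_2\|^2|w|^2$. After absorbing both $\frac{\nu}{4}\|w\|^2$ terms we are left with $\frac{d}{dt}|w|^2\le \frac{2C(\Om)^4}{\nu}\|v_2(t)\|^2|w|^2$. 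Since $v_2\in L^2(\tau,T;V)$ the coefficient is integrable, so Gronwall's lemma gives
$$|w(t)|^2\le |w(\tau)|^2\exp\!\Big(\frac{2C(\Om)^4}{\nu}\int_\tau^t\|v_2(s)\|^2\,ds\Big).$$

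Uniqueness follows at once: $w(\tau)=0$ forces $w\equiv 0$. For the Lipschitz continuity of the map $v(\tau)\mapsto v(t)$ in $H$, the a priori bound (\ref{er.3.19}) — which is preserved in the limit $\delta\to0$ — controls $\int_\tau^t\|v_2(s)\|^2\,ds$ by a quantity depending only on $|v_2(\tau)|$, on $t-\tau$ and on the data, so the exponential factor is a finite constant, bounded uniformly for initial data in a bounded subset of $H$ and $t$ in a bounded interval. I expect the only genuinely delicate points to be the justification of the two steps used implicitly at the start — that each solution is an admissible test function in the other's inequality, and that $\frac{d}{dt}|w|^2=2\langle w_t,w\rangle$ holds for a.e.\ $t$; granting these, the remainder is a routine energy estimate.
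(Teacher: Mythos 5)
Your proposal is correct and follows essentially the same route as the paper: cross-testing the variational inequality (\ref{eqn:er2.16}) with the two solutions so that the $j$-terms cancel, then the energy estimate for $w=v_1-v_2$ via Lemma \ref{lemma3.1}, the Ladyzhenskaya inequality (\ref{lady}), Young's inequality and Gronwall, exactly as in the paper's derivation of (\ref{eqn:er2.14e})--(\ref{eqn:er2.15a}). The only cosmetic differences are that you keep the convective remainder as $b(w,v_2,w)$ instead of the paper's $b(u,w,u)$ (an equivalent symmetric choice) and that you omit the paper's extra Poincar\'e step, which merely inserts a decay term $-\sigma/2$ in the exponent and is not needed for uniqueness or the Lipschitz property.
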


\begin{proof}
Let $v$ and $w$ be two solutions of Problem \ref{pr.2.1}. Set $\Theta=w$ in the variational inequality for $v$, $\Theta=v$ in the variational inequality for $w$, and add thus obtained inequalities. The terms with the boundary functionals reduce and for
 $u(t)= w(t)-v(t)$ we obtain
 \begin{eqnarray*}\label{334}
  \frac{1}{2}\frac{d}{dt}|u(t)|^{2} + \nu \|u(t)\|^{2} \leq
  b(u(t) , w(t), u(t))
  + b(u(t) ,\xi , u(t)).
 \end{eqnarray*}
    \noindent By Lemma \ref{lemma3.1} and the Ladyzhenskaya inequality (\ref{lady}) we obtain
\begin{equation} \label{eqn:er2.14e}
           \frac{d}{dt}|{u(t)}|^{2} + \frac{\nu}{2}
            \|{u(t)}\|^{2} \leq \frac{2}{\nu}C(\Om)^4\|w(t)\|^2 |u(t)|^2,
\end{equation}
\noindent and in view of the Poincar\'e inequality  we conclude
    \begin{equation*} \label{eqn:er2.14a}
            \frac{d}{dt}|{u(t)}|^{2} + \frac{\sigma}{2}
            |{u(t)}|^{2} \leq   \frac{2}{\nu}C(\Om)^4\|w(t)\|^2 |u(t)|^2.
    \end{equation*}
Using again the Gronwall lemma, we obtain
    \begin{equation} \label{eqn:er2.15a}
           |{u(t)}|^{2} \leq
            |{u(\tau)}|^{2} \exp\{-\int_\tau^t \left(\frac{\sigma}{2} -
         \frac{2}{\nu}C(\Om)^4\|w(s)\|^2\right)ds\}.
    \end{equation}
From (\ref{3.19}) it follows that the solution $w$ of Problem \ref{pr.2.1} belongs to $L^2(\tau, t; V)$.
By (\ref{eqn:er2.15a}) the map $v(\tau)\to v(t)$, $t>\tau \geq 0$, in $H$ is Lipschitz continuous, with
\begin{eqnarray} \label{ineq:333}
     |w(t) - v(t)| \leq C |w(\tau) - v(\tau)|
\end{eqnarray}
uniformly for $t, \tau$ in a given interval $[0,T]$ and initial conditions $w(0), v(0)$ in a given 
bounded set $B$ in $H$.

In particular, as $u(0)= w(0)-v(0)=0$,  the solution  $v$   of Problem \ref{pr.2.1} is unique.
This ends the proof of Theorem \ref{thm3.2}.
\end{proof}

\renewcommand{\theequation}{\arabic{section}.\arabic{equation}}
\setcounter{equation}{0}
\section{Preliminaries from the theory of dynamical systems} \label{l-trajectories}
Let us consider an abstract autonomous evolutionary problem
\begin{eqnarray} \label{tra:1}
    \frac{dv(t)}{dt} &=& F(v(t))   \quad {\rm in} \quad X,  \\
    v(0)&=& v_0. \nonumber
\end{eqnarray}
where $X$ is a Banach space, $F: X\to X$ is a nonlinear operator, and $v_0\in X$. We assume that the above problem has a global in time unique solution $[0,\infty ) \ni t \to v(t) \in X$ for every $v_0\in X$. In this case one can associate with the problem a semigroup $\{S(t)\}_{t\geq 0}$ of (nonlinear) operators $S(t):X\to X$ setting 
$S(t)v_0=v(t)$, where $v(t)$, $t>0$, is the unique solution of (\ref{tra:1}). 

From properties of the semigroup of operators $\{S(t)\}_{t\geq 0}$ we may then conclude the basic features of the behaviour of solutions of problem (\ref{tra:1}), in particular, their time asymptotics. One of the objects existence of which characterize the asymptotic behaviour of solutions is the global attractor. It is a compact and invariant with respect to operators $S(t)$ subset of the phase space $X$ (in general, a metric space) that uniformly attracts all bounded subsets of $X$. 
\begin{definition} A global attractor for a semigroup $\{S(t)\}_{t\geq 0}$ in a Banach space $X$ is 
a subset ${\cal A}$ of $X$ such that 
\begin{itemize}
\item ${\cal A}$ is compact in X.

\item ${\cal A}$ is invariant, i.e., $S(t){\cal A}={\cal A}$ for every $t\geq 0$.

\item For every $\varepsilon >0$ and every bounded set $B$ in $X$ there exists $t_0=t_0(B,\varepsilon)$ such that for all $t\geq t_0$, $S(t)B$ is a subset of the $\varepsilon$-neighbourhood of the attractor ${\cal A}$ (uniform attraction property).
\end{itemize}
\end{definition}
The global attractor defined above is uniquely determined by the semigroup $\{S(t)\}_{t\geq 0}$. Morever, it is connected and also has the following properties: it is the maximal compact invariant set and the minimal set that attracts all bounded sets. The global attractor may have a very complex structure. However, as a compact set (in an infinite dimensional Banach space) its interior is empty. For many dynamical systems the global attractor has a finite fractal dimension (defined below) which has a number of important consequences for the behaviour of the flow generated by the semigroup
\cite{robinson-2011-dim, robinson-2001-infty}.
\begin{definition} The fractal dimension of a compact set $K$ in a Banach space $X$ is defined as
\begin{eqnarray*} \label{tra:2}
    d_f^X(K) = \lim\sup_{\varepsilon\to 0}\frac{\log N_\varepsilon^X(K)}{\log(\frac{1}{\varepsilon})}
\end{eqnarray*}
where $N_\varepsilon^X(K)$ is the minimal number of balls of radius $\varepsilon$ in $X$ needed to cover $K$.
\end{definition}
Another important property that holds for many dynamical systems is the existence of an exponential attractor. 
\begin{definition} An exponential attractor for a semigroup $\{S(t)\}_{t\geq 0}$ in a Banach space $X$ is 
a subset ${\cal M}$ of $X$ such that 
\begin{itemize}
\item ${\cal M}$ is compact in X.

\item ${\cal M}$ is positively invariant, i.e., $S(t){\cal M} \subset {\cal M}$ for every $t\geq 0$.

\item Fractal dimension of ${\cal M}$ is finite, i.e., $d_f^X({\cal M}) < \infty$.

\item ${\cal M}$ attracts exponentially the images of bounded subsets of $X$, i.e., there exist a universal constant $c_1$ and a monotone function $\Phi$ such that for every bounded set $B$ in $X$, its image $S(t)B$ is a subset 
of the $\varepsilon(t)$-neighbourhood of ${\cal M}$ for all $t\geq t_0$, 
where $\varepsilon(t)= \Phi(||B||_X) e^{-c_1t}$ (exponential attraction property).
\end{itemize}
\end{definition}
In the following sections we shall consider the problem of the existence of the global and an exponential attractor for the dynamical system considered in this paper. 

Let $X$, $Y$, and $Z$ be three Banach spaces such that 
\begin{eqnarray*} \label{tra:3}
    Y\subset X \quad {\rm with \,\,compact\,\,imbedding} \quad {\rm and} \quad X\subset Z.
\end{eqnarray*}
We assume, moreover, that $X$ is reflexive and separable. 

For $\tau >0$, let
\begin{eqnarray*} \label{tra:4}
    X_\tau = L^2(0,\tau;X),
\end{eqnarray*}
and
\begin{eqnarray*} \label{tra:5}
    Y_\tau = \{u\in L^{p_1}(0,\tau;Y),\,\,\frac{du}{dt}\in L^{p_2}(0,\tau;Z)\},
\end{eqnarray*}
for some $2\leq p_1 < \infty$ and $1 \leq p_2 < \infty$.

By $C([0,\tau];X_w)$ we denote the space of weakly continuous functions from the interval $[0,\tau]$ to the Banach space $X$, and we assume that the solutions of (\ref{tra:1}) are at least in $C([0,T];X_w)$ for all $T>0$. Then by an 
$l$-trajectory we mean parts of solution trajectories parametrized by time from the interval $[0,l]$. If 
$v=v(t), t>0$, is the solution of (\ref{tra:1}) then $\chi=v|_{[0,l]}$ is an $l$-trajectory as well as all shifts
$L_t(v)(\tau)=v(t+\tau), 0\leq \tau\leq l$, for $t>0$. 

We can now formulate a theorem which gives criteria for the existence of a~global attractor ${\cal A}$ for the semigroup $\{S(t)\}_{t\geq 0}$ in $X$ and its finite dimensionality. These criteria are stated as assumptions (A1)-(A8) in 
\cite{malek-prazak-2002}.
\begin{itemize}
\item[(A1)] For any $v_0\in X$ and arbitrary $T>0$ there exists (not necessarily unique)
    $v\in C([0,T];X_w) \cap Y_T$, a solution of the evolutionary problem on $[0,T]$ with $v(0)=v_0$. Moreover,
    for any solution the estimates of $||v||_{Y_T}$ are uniform with respect to $||v(0)||_X$.
    
\item[(A2)] There exists a bounded set $B^0\subset X$ with the following properties: if $v$ is an arbitrary
solution with initial condition $v_0\in X$ then (i) there exists $t_0=t_0(||v_0||_X)$ such that $v(t)\in B^0$ for 
all $t\geq t_0$ and (ii) if $v_0\in B^0$ then $v(t)\in B^0$ for all $t\geq 0$.

\item[(A3)] Each $l$-trajectory has among all solutions a unique continuation which means that from an end point of an $l$-trajectory there starts at most one solution.

\item[(A4)] For all $t>0$, $L_t: X_l \to X_l$ is continuous on ${\cal B}_0^l$ - the set of all $l$-trajectories starting at any point of $B^0$ from (A2).

\item[(A5)] For some $\tau > 0$, the closure in $X_l$ of the set $L_\tau ({\cal B}_0^l)$ is included in ${\cal B}_0^l$.

\item[(A6)] There exists a space $W_l$ such that $W_l\subset X_l$ with compact embedding, and $\tau >0$ such that $L_\tau:X_l\to W_l$ is Lipschitz continuous on ${\cal B}_l^1$ - the closure of $L_\tau({\cal B}_0^l)$ in $X_l$.

\item[(A7)] The map $e:X_l \to X$, $e(\chi)=\chi(l)$ is continuous on ${\cal B}^1_l$.

\item[(A8)] The map $e:X_l \to X$ is H\"older-continuous on ${\cal B}^1_l$.
\end{itemize}
\begin{theorem} \label{malek1}
Let the assumptions (A1)-(A5), (A7) hold. Then there exists a global attractor ${\cal A}$ for the semigroup
$\{S(t)\}_{t\geq 0}$ in $X$. Moreover, if the assumptions (A6), (A8) are satisfied then the fractal dimension of the attractor is finite.
\end{theorem}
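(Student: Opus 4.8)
The plan is to lift the dynamics from the phase space $X$ to the space of $l$-trajectories $X_l=L^2(0,l;X)$, construct an attractor there for the shift semigroup $\{L_t\}_{t\ge 0}$, and then push it back to $X$ through the evaluation map $e(\chi)=\chi(l)$. By (A3) the continuation of each $l$-trajectory is unique, so $\{L_t\}_{t\ge 0}$ is a genuine semigroup on the set of $l$-trajectories, and by (A4) each $L_t$ is continuous on ${\cal B}_0^l$. The first step is to exhibit a compact absorbing set in $X_l$. From (A2) the family ${\cal B}_0^l$ of $l$-trajectories issuing from $B^0$ is bounded in $X_l$, and by (A5) we have $\overline{L_\tau({\cal B}_0^l)}\subset {\cal B}_0^l$, so ${\cal B}_l^1:=\overline{L_\tau({\cal B}_0^l)}$ is a bounded, positively invariant, absorbing set. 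The crucial compactness here comes not from (A6) but from (A1): the $Y_T$-estimates that are uniform in $\|v(0)\|_X$, together with the compact embedding $Y\subset X$ and the embedding $X\subset Z$, yield via the Aubin--Lions--Simon lemma that $Y_l$ embeds compactly into $X_l$. Hence the uniformly $Y_l$-bounded family $L_\tau({\cal B}_0^l)$ is precompact in $X_l$, and ${\cal B}_l^1$ is compact.

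With a compact absorbing set and a continuous semigroup acting on it, the standard theory of dynamical systems produces a global attractor ${\cal A}_l=\omega({\cal B}_l^1)=\bigcap_{t\ge 0}\overline{\bigcup_{s\ge t}L_s({\cal B}_l^1)}$ for $\{L_t\}$ in $X_l$. I would then set ${\cal A}=e({\cal A}_l)$; by (A7) the map $e$ is continuous on ${\cal B}_l^1\supset{\cal A}_l$, so ${\cal A}$ is compact in $X$ as the continuous image of a compact set. The key structural fact is the intertwining identity $e\circ L_t=S(t)\circ e$ on trajectories, which follows at once from $e(L_t\chi)=(L_t\chi)(l)=v(t+l)=S(t)(v(l))=S(t)(e(\chi))$. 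From this, invariance transfers as $S(t){\cal A}=S(t)e({\cal A}_l)=e(L_t{\cal A}_l)=e({\cal A}_l)={\cal A}$, and the uniform attraction of bounded sets in $X$ follows by applying $e$ to the uniform attraction of bounded trajectory sets toward ${\cal A}_l$ in $X_l$, using the (uniform) continuity of $e$ on the compact set ${\cal B}_l^1$. This settles existence under (A1)--(A5),(A7).

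For finite dimensionality I would first bound $d_f^{X_l}({\cal A}_l)$ and then transfer the bound to $X$. The bound in $X_l$ uses (A6): $L_\tau:X_l\to W_l$ is Lipschitz on ${\cal B}_l^1$ with $W_l\subset X_l$ compact. I invoke the abstract smoothing/covering lemma for a compact set invariant under a map that is Lipschitz into a compactly embedded space: one covers ${\cal A}_l$ by $X_l$-balls, pushes them forward by $L_\tau$ (which lands Lipschitz-continuously in the more regular $W_l$), uses compactness of $W_l\hookrightarrow X_l$ to recover the image by fewer balls of comparable radius at a fixed contraction ratio, and then iterates using $L_\tau{\cal A}_l={\cal A}_l$, summing a geometric series to get $d_f^{X_l}({\cal A}_l)<\infty$. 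Finally (A8) transfers this to $X$: since $e$ is H\"older continuous with some exponent $\theta\in(0,1]$ on ${\cal B}_l^1$, and a H\"older map inflates fractal dimension by at most the factor $1/\theta$, I conclude $d_f^X({\cal A})=d_f^X(e({\cal A}_l))\le \theta^{-1}d_f^{X_l}({\cal A}_l)<\infty$.

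I expect the main obstacle to be the finite-dimensionality step rather than existence. Proving $d_f^{X_l}({\cal A}_l)<\infty$ from the single condition (A6) requires careful geometric bookkeeping in the covering argument, balancing the Lipschitz constant of $L_\tau$ against the entropy numbers of the compact embedding $W_l\hookrightarrow X_l$. Equally delicate is recognizing that mere continuity of $e$ (A7) is enough to transport compactness and attraction but is \emph{insufficient} to control fractal dimension, which is precisely why the stronger H\"older continuity (A8) is indispensable for the dimension estimate. The remaining inputs---Aubin--Lions compactness for existence and the intertwining identity for transferring invariance and attraction---are comparatively routine.
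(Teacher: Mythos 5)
The paper itself states this theorem without proof, simply citing M\'alek--Pra\v{z}\'ak \cite{malek-prazak-2002}, and your argument is a correct reconstruction of exactly the proof given in that reference: compactness of the trajectory set ${\cal B}_l^1$ via the $Y_l$-bounds of (A1) and the Aubin--Lions lemma, the attractor ${\cal A}_l$ for the shift semigroup, transfer to $X$ by ${\cal A}=e({\cal A}_l)$ through the intertwining $e\circ L_t=S(t)\circ e$ with (A7), and finite dimension via the Lipschitz-smoothing covering lemma from (A6) followed by the H\"older transfer of fractal dimension from (A8). No gaps worth flagging; the details you gloss over (positive invariance of ${\cal B}_l^1$, uniform continuity of $e$ on the compact absorbing set) are routine and handled the same way in the cited source.
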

For the existence of an exponential attractor we need two additional properties to hold, where now $X$ is a Hilbert space 
(cf., \cite{malek-prazak-2002}).
\begin{itemize}
\item[(A9)] For all $\tau>0$ the operators $L_t:X_l\to X_l$ are (uniformly with respect to $t\in[0,\tau]$) Lipschitz continuous on ${\cal B}_l^1$.

\item[(A10)] For all $\tau>0$ there exists $c>0$ and $\beta \in(0,1]$ such that for all $\chi\in {\cal B}_l^1$ and
$t_1, t_2 \in[0,\tau]$ it holds that 
\begin{eqnarray} \label{exponential-problem}
      ||L_{t_1}\chi - L_{t_2}\chi||_{X_l} \leq c |t_1 -t_2|^\beta.
\end{eqnarray}
\end{itemize}
\begin{theorem} \label{existence-exponential} 
Let $X$ be a separable Hilbert space and let the assumptions (A1)-(A6) and (A8)-(A10) hold. Then there exists an exponential attractor ${\cal M}$ for the semigroup $\{S(t)\}_{t\geq 0}$ in $X$. 
\end{theorem}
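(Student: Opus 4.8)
The plan is to construct the exponential attractor first in the space of $l$-trajectories $X_l=L^2(0,l;X)$, where the smoothing estimate (A6) is at our disposal, and then to transport it to the phase space $X$ by means of the evaluation map $e(\chi)=\chi(l)$. The shift semigroup $\{L_t\}_{t\ge0}$ on trajectories is intertwined with $\{S(t)\}_{t\ge0}$ through $e$, namely $e\circ L_t=S(t)\circ e$, so it suffices to produce a set ${\cal M}_l\subset X_l$ that is compact, positively invariant under $\{L_t\}$, of finite fractal dimension, and exponentially attracting on ${\cal B}_l^1$; its image ${\cal M}=e({\cal M}_l)$ will then be an exponential attractor in $X$. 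We work throughout on ${\cal B}_l^1$, the closure in $X_l$ of $L_\tau({\cal B}_0^l)$: since ${\cal B}_0^l$ is bounded in $X_l$ by (A1)--(A2) and $L_\tau$ maps it into the compactly embedded space $W_l$ by (A6), the set ${\cal B}_l^1$ is compact, and it is positively invariant by (A5).

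The heart of the argument is a discrete construction for the single map $L_\tau$. By (A6), $\|L_\tau\chi_1-L_\tau\chi_2\|_{W_l}\le c\,\|\chi_1-\chi_2\|_{X_l}$ for $\chi_1,\chi_2\in{\cal B}_l^1$, with $W_l$ compactly embedded in $X_l$; together with the invariance $L_\tau({\cal B}_l^1)\subset{\cal B}_l^1$ this is exactly the smoothing hypothesis of the abstract theory of exponential attractors. Because $X$ is a separable Hilbert space, so is $X_l$, and the unit ball of $W_l$ is covered by finitely many $X_l$-balls of any prescribed radius; composing such a covering with the Lipschitz map $L_\tau$ shows that the image under $L_\tau$ of any $X_l$-ball of radius $r$ centred in ${\cal B}_l^1$ is covered by $N$ balls of radius $r/2$, with $N$ independent of $r$. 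Iterating this covering, collecting the centres produced at every step, taking the union of their $L_\tau$-orbits and closing in $X_l$, one obtains a compact, positively invariant set ${\cal E}\subset{\cal B}_l^1$ that attracts ${\cal B}_l^1$ under the iterates $L_\tau^n$ at a geometric rate and whose fractal dimension is bounded by $\log N/\log 2$. Thus ${\cal E}$ is an exponential attractor for the discrete system generated by $L_\tau$.

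To recover continuous time I set ${\cal M}_l=\bigcup_{t\in[0,\tau]}L_t{\cal E}$. Positive invariance and exponential attraction for the whole family $\{L_t\}$ follow from the discrete statement combined with the uniform-in-$t$ Lipschitz continuity (A9), which prevents the covering balls from being distorted by more than a fixed factor as $t$ ranges over $[0,\tau]$. That the union over $t$ preserves finite fractal dimension is precisely where (A10) is needed: the map $(t,\chi)\mapsto L_t\chi$ is H\"older continuous from $[0,\tau]\times{\cal E}$ into $X_l$, and the image of a finite-dimensional set under a H\"older map has finite fractal dimension, with only the single extra dimension of the time interval. Finally, applying $e$, which is H\"older continuous on ${\cal B}_l^1$ by (A8), gives ${\cal M}=e({\cal M}_l)$: the H\"older estimate preserves finite fractal dimension and compactness, positive invariance transfers through $e\circ L_t=S(t)\circ e$, and exponential attraction in $X_l$ passes to exponential attraction of $S(t)B$ in $X$.

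The main obstacle is the discrete covering step: turning the smoothing property (A6) and the compactness of $W_l\hookrightarrow X_l$ into a uniform covering number $N$, and hence into a finite-dimensional, exponentially attracting invariant set for $L_\tau$. The remaining work---passing to continuous time and projecting to $X$---is careful but essentially routine bookkeeping of how fractal dimension and attraction rates behave under the Lipschitz map (A9) and the H\"older maps (A8), (A10); the one genuinely quantitative ingredient is the entropy estimate supplied by compactness of the embedding.
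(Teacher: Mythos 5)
The paper does not actually prove this theorem: it is quoted as a black box from \cite{malek-prazak-2002}, so there is no internal proof to compare against, and your proposal has to be judged against the argument of that reference. Your reconstruction follows essentially the same route as M\'alek--Pra\v{z}\'ak: first build an exponential attractor ${\cal E}$ for the discrete map $L_\tau$ on the trajectory space, using the smoothing property (A6) together with the compactness of $W_l\subset X_l$ (the entropy/covering step, with covering number $N$ independent of the radius $r$ --- correct, by scaling of the two norms); then pass to continuous time via ${\cal M}_l=\bigcup_{t\in[0,\tau]}L_t{\cal E}$, controlled by the Lipschitz estimate (A9) in $\chi$ and the H\"older estimate (A10) in $t$; finally push forward to the phase space by the H\"older evaluation map (A8), using $e\circ L_t=S(t)\circ e$ (which itself rests on the unique continuation (A3)). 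This is sound. Two small imprecisions, neither fatal: (i) the assumptions as stated do not give compactness of ${\cal B}_l^1$ itself --- (A6) asserts Lipschitz continuity only \emph{on} ${\cal B}_l^1$, so what is known to be bounded in $W_l$, hence precompact in $X_l$, is the image $L_\tau({\cal B}_l^1)$, not ${\cal B}_l^1$; the construction should therefore be run inside $\overline{L_\tau({\cal B}_l^1)}$, which is compact and positively invariant thanks to (A5) and (A2). (ii) H\"older maps do not add ``one extra dimension'' but inflate fractal dimension by the factor $1/\beta$, so the bounds are of the form $(1+d_f({\cal E}))/\beta$ and then a further division by the exponent in (A8); this still yields finiteness, which is all that is claimed. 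Incidentally, your covering argument never uses the Hilbert structure of $X$, so it in fact establishes a slightly more general (Banach-space) version of the theorem than the one stated.
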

For the proofs of Theorems \ref{malek1} and \ref{existence-exponential} we refer the readers to corresponding theorems in 
\cite{malek-prazak-2002}.

\renewcommand{\theequation}{\arabic{section}.\arabic{equation}}
 \setcounter{equation}{0}
\section{Existence of the global attractor of a finite fractal dimension} \label{global-attractor}
In this section we prove the following theorem.
\begin{theorem} \label{theorem-main}
There exists a global attractor of a finite fractal dimension for the semigroup $\{S(t)\}_{t\geq 0}$ associated with Problem \ref{pr.2.1}.
\end{theorem}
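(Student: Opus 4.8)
The plan is to verify the abstract hypotheses (A1)--(A8) recalled in Section~\ref{l-trajectories} and then invoke the corresponding result of \cite{malek-prazak-2002}. The natural choice of spaces is $X=H$, $Y=V$ and $Z=V'$, with $p_1=p_2=2$: indeed $V$ embeds compactly into $H$ (Rellich--Kondrachov), $H\subset V'$, and $H$, being a closed subspace of $L^2(\Omega)^2$, is a separable Hilbert space, hence reflexive and separable as required. With these identifications the solutions produced by Theorem~\ref{thm3.1} lie in $C([0,T];H_w)\cap Y_T$, and $X_l=L^2(0,l;H)$ is the ambient space of $l$-trajectories.

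The structural assumptions follow from the estimates already at hand. Assumption (A1) is Theorem~\ref{thm3.1} together with the uniform bound (\ref{er.3.19}), which controls $\|v\|_{Y_T}$ in terms of $|v(0)|$. For (A2) I would integrate the energy inequality (\ref{eqn:er3.30.0}): after the Poincar\'e inequality one obtains $\frac{d}{dt}|v|^2+\sigma|v|^2\le 2F$ for a positive constant $\sigma$, and the Gronwall lemma yields an absorbing ball $B^0$ in $H$ of radius slightly larger than $(2F/\sigma)^{1/2}$, which can be taken positively invariant. Assumption (A3) is immediate from the uniqueness part of Theorem~\ref{thm3.2}: the endpoint of an $l$-trajectory is a datum in $H$ from which exactly one solution emanates. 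The continuity statements (A4), (A5) and (A7) all reduce to the continuous dependence estimate (\ref{ineq:333}): given two $l$-trajectories close in $X_l$, the mean value theorem provides a time $s^\ast\in[0,l]$ at which their $H$-distance is at most $l^{-1/2}$ times their $X_l$-distance, and (\ref{ineq:333}) propagates this control to all later times, giving continuity of the shift $L_t$ and of the endpoint map $e(\chi)=\chi(l)$, while (A5) is inherited from the positive invariance in (A2).

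The heart of the proof is (A6), and this is where I expect the main difficulty. I would take $W_l=\{u\in L^2(0,l;V):u_t\in L^2(0,l;V')\}$, which embeds compactly into $X_l$ by the Aubin--Lions--Simon lemma since $V\hookrightarrow\hookrightarrow H\hookrightarrow V'$. For two solutions $v,w$ with difference $u=w-v$, the essential observation is that choosing $\Theta=w$ in inequality (\ref{eqn:er2.16}) for $v$ and $\Theta=v$ in the inequality for $w$ and adding, the nondifferentiable terms $j(w)-j(v)$ and $j(v)-j(w)$ cancel exactly; what survives is precisely the differential inequality (\ref{eqn:er2.14e}). Integrating it over $[0,l]$ and using the time-selection argument above to bound $|u(s^\ast)|$ controls $\|u\|_{L^2(\tau,\tau+l;V)}$ by $\|u\|_{X_l}$, the $L^2(V)$ part of the required Lipschitz estimate. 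The genuinely delicate point is the $L^2(V')$ bound on $u_t$: writing each variational inequality as $v_t=-\nu Av-B(v)+{\cal L}(v)+\zeta_v$ with $\zeta_v$ the Tresca friction term, the contributions $Au$, $B(w)-B(v)$ and ${\cal L}(w)-{\cal L}(v)$ are all Lipschitz in $\|u\|_V$ (using Lemma~\ref{lemma3.3} and the boundedness of $v,w$ in the absorbing set), so the obstacle is the friction difference $\zeta_w-\zeta_v$, whose subgradient nature gives only a uniform pointwise bound on $\Gamma_0$, not smallness. This is the crux the method must resolve, and I would treat it by exploiting the monotonicity of $\partial j$ so that the friction contribution does not spoil the Lipschitz estimate into $W_l$.

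Finally, (A8), the H\"older continuity of $e$ on ${\cal B}_l^1$, follows from the smoothing established in (A6): the compactly embedded space $W_l$ controls $\chi(l)$ in $H$, and interpolation between the $L^2(V)$ and $L^2(V')$ bounds for the difference yields an exponent $\beta\in(0,1]$. With (A1)--(A8) verified, the abstract theorem of \cite{malek-prazak-2002} recalled in Section~\ref{l-trajectories} gives both the existence of the global attractor and the finiteness of its fractal dimension, which is the assertion of Theorem~\ref{theorem-main}.
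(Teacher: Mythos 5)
Your overall strategy (verify (A1)--(A8) for $X=H$, $Y=V$, $X_l=L^2(0,l;H)$ and invoke the theorem of \cite{malek-prazak-2002}) is the paper's strategy, and your treatment of (A1)--(A4), (A7) is essentially correct, but there are two genuine gaps at exactly the points where the Tresca functional $j$ makes the problem nonstandard.

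First, (A5) is not ``inherited from the positive invariance in (A2).'' Positive invariance only gives $L_\tau({\cal B}_0^l)\subset{\cal B}_0^l$; assumption (A5) requires the \emph{closure} in $X_l$ of $L_\tau({\cal B}_0^l)$ to lie in ${\cal B}_0^l$, so the real content is that ${\cal B}_0^l$ is closed in $L^2(0,l;H)$: every $X_l$-limit $\chi$ of solutions $\chi_n$ of the variational inequality (\ref{eqn:er2.16}) starting in $B^0$ must again be such a solution with $\chi(0)\in B^0$. This is where the paper does substantial work: the uniform bounds of (A1) give weak convergence in $L^2(0,l;V)$ with derivatives in $L^2(0,l;V')$, Aubin--Lions gives strong convergence in $L^2(0,l;H)$, but to pass to the limit in the nondifferentiable boundary term one needs $\int_0^l j(\chi_n)\eta\,dt\to\int_0^l j(\chi)\eta\,dt$, which the paper obtains from the Ne\v{c}as trace inequality $\|\gamma(v)\|_{L^2(\partial\Omega)}\le\varepsilon\|\nabla v\|_{L^2(\Omega)}+C_\varepsilon\|v\|_{L^2(\Omega)}$, yielding strong convergence of traces in $L^2(0,l;L^2(\Gamma_0))$; finally $\chi(0)\in B^0$ uses that $B^0$ was deliberately constructed as a \emph{closed} set (the closure of $S(t_0)B_H(0,\rho)$), a point your construction of the absorbing set does not provide. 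None of this appears in your proposal.

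Second, your choice $W_l=\{u\in L^2(0,l;V):u_t\in L^2(0,l;V')\}$ in (A6) asks for too much, and the difficulty you flag there is real but your proposed cure does not work: monotonicity of $\partial j$ gives a sign when the friction difference is paired against $w-v$, which is exactly what yields (\ref{eqn:er2.14e}) and the $L^2(V)$ half of the estimate, but it gives no norm bound in $V'$ on the difference of two subgradients, which stay of size $O(k)$ on $\Gamma_0$ however close $w$ and $v$ are. The paper resolves this by changing the target space rather than fighting the friction term: it takes $W_l=\{u\in L^2(0,l;V):u'\in L^1(0,l;V')\}$, still compactly embedded in $X_l$, and never isolates subgradients. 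Instead it tests the inequality for $v$ with $\Theta=v+\psi$ and that for $w$ with $\Theta=w+\psi$ and subtracts, so the friction terms enter only through the combination $j(w+\psi)-j(v+\psi)+j(v)-j(w)\le 2j(u)\le c_0\|u\|$, independent of $\psi$; this gives $\|u'(t)\|_{V'}\le C(1+\|v(t)\|+\|w(t)\|)\|u(t)\|$, a product of two functions that are merely $L^2$ in time, whence only an $L^1(0,l;V')$ Lipschitz bound (\ref{eqn:va4.34}) --- precisely why $W_l$ must be defined with $L^1$ and why your $L^2(V')$ version is unobtainable. With (A6) repaired this way, (A8) then follows, as in the paper, by integrating (\ref{ineq:333}) in $\tau$ to get $|w(l)-v(l)|\le C l^{-1/2}\|w-v\|_{L^2(0,l;H)}$, with no interpolation needed.
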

\begin{proof} From the considerations in the previous section it follows that to prove the theorem it suffices to check assumptions (A1)-(A6), (A8). For the convenience of the reader we repeat their statements in appropriate places.

\vspace{0.3cm}

{\it Assumption} (A1). For any $v_0\in X$ and arbitrary $T>0$ there exists (not necessarily unique)
    $v\in C([0,T];X_w) \cap Y_T$, a solution of the evolutionary problem on $[0,T]$ with $v(0)=v_0$. Moreover,
    for any solution the estimates of $||v||_{Y_T}$ are uniform with respect to $||v(0)||_X$.
    
\vspace{0.3cm} In our case, set $X=H$, $Y=V$, and $Y_T = \{u\in L^2(0,T;V), u' \in L^2(0,T; V')\}$. From Theorems \ref{thm3.1}
and \ref{thm3.2} we know that for any $v_0\in H$ and arbitrary $T>0$ there exists a unique 
    $v\in C([0,T];H) \cap Y_T$,  solution of Problem \ref{pr.2.1}.
We shall obtain the needed estimates directly from the variational inequality, cf. (\ref{eqn:er2.16}),    
\begin{eqnarray} \label{eqn:va4.1}
     \langle v_{t}(t),  \Theta-v(t) \rangle\, + \,\,\nu a(v(t), \Theta- v(t)) \,
            &+& \, b(v(t), v(t), \Theta - v(t))  
           \\
      &+& j(\Theta) - j( v(t))\, \geq \, ({\cal L}(v(t)), \Theta- v(t)) \nonumber
\end{eqnarray}    
Set $\Theta=0$ in (\ref{eqn:va4.1}) to get
\begin{eqnarray*} \label{eqn:va4.2}
     \frac{1}{2}\frac{d}{dt}|v|^2 + \nu||v||^2 + j(v) \leq ({\cal L}(v), v),
\end{eqnarray*}  
as $j(0)=0$. Since, by Lemma \ref{lemma3.1},
\begin{eqnarray*} \label{eqn:va4.3}
     ({\cal L}(v), v) \leq \frac{\nu}{2}||v||^2 + \nu ||\xi||^2,
\end{eqnarray*}  
we obtain
\begin{eqnarray} \label{eqn:va4.4}
     \frac{d}{dt}|v|^2 + \nu||v||^2 + 2j(v) \leq 2\nu ||\xi||^2 = F.
\end{eqnarray}
 Integrating in $t$ we obtain
\begin{equation} \label{eqn:va4.5}
     |{v(t)}|^{2} + \nu\int_{0}^{t}\|{v(s)}\|^{2} ds \leq |v(0)|^{2} + 2tF
\end{equation}
and we deduce that
 \begin{eqnarray*}\label{eqn:va4.6}
      v \mbox{ is bounded in } L^{2}(0 , T ; V)\cap L^{\infty}(0, T ; H), 
 \end{eqnarray*}
uniformly with respect to $|v(0)|$.

\vspace{0.3cm}

To get a uniform with respect to $|v(0)|$ estimate of $v'$ in $L^2(0,T;V')$ set $\Theta=v-\psi$, $\psi\in V$, 
in (\ref{eqn:va4.1}). We then have
\begin{equation}\label{eqn:va4.7}
      \langle v', \psi \rangle \,\,\leq \, ({\cal L}(v), \psi) - \nu a(v,\psi) - b(v, v, \psi) + j(v-\psi) -j(v).
\end{equation}
Thanks to the Poincar\'e inequality we have, $||\gamma(v)||_{L^2(\partial\Omega)} \leq C ||v||$, and
\begin{equation}\label{eqn:va4.8}
      j(v-\psi) -j(v) = k\int_{\Gamma_0} (|v-\psi|-|v|) \leq  k\int_{\Gamma_0}|\psi| \leq C(\Gamma_0)||\psi||. 
\end{equation}
Moreover, using the Ladyzhenskaya inequality (\ref{lady}) to the nonlinear term, we have
\begin{equation}\label{eqn:va4.9}
     ({\cal L}(v), \psi) - \nu a(v,\psi) - b(v, v, \psi) \leq C_1(||v|| + |v|\,||v|| + 1)||\psi||.
\end{equation}
From (\ref{eqn:va4.7})-(\ref{eqn:va4.9}) we obtain
\begin{equation}\label{eqn:va4.10}
     ||v'||_{V'} \leq C_2(||v|| + |v|\,||v|| + 1) \nonumber
\end{equation}
and
\begin{eqnarray*}\label{eqn:va4.11}
     ||v'||_{L^2(0,T;V')}^2 &=& \int_0^T ||v'(t)||_{V'}^2dt  \\  &\leq& C_2\left(\int_0^T||v(t)||^2dt 
                           + ||v||_{L^{\infty}(0,T;H)}^2\int_0^T||v(t)||^2dt + T\right)  
                           \leq C(|v(0)|). \nonumber
\end{eqnarray*}
Thus, (A1) holds true.

\vspace{0.3cm}

{\it Assumption} (A2). There exists a bounded set $B^0\subset X$ with the following properties: if $v$ is an arbitrary
solution with initial condition $v_0\in X$ then (i) there exists $t_0=t_0(||v_0||_X)$ such that $v(t)\in B^0$ for 
all $t\geq t_0$ and (ii) if $v_0\in B^0$ then $v(t)\in B^0$ for all $t\geq 0$.

\vspace{0.3cm}

From (\ref{eqn:va4.4}) and the Poincar\'e inequality, 
\begin{eqnarray*} \label{eqn:va4.12}
     \frac{d}{dt}|v|^2 + \nu\lambda_1 |v|^2 \leq F,
\end{eqnarray*}
and then by the Gronwall lemma,
\begin{eqnarray*} \label{eqn:va4.13}
     |v(t)|^2 \leq |v(0)|^2 e^{-\nu\lambda_1 t} + \frac{F}{\nu\lambda_1}.
\end{eqnarray*}
Thus, there exists a bounded absorbing set (e.g., the ball $B_H(0, \rho)$ with $\rho^2=2\frac{F}{\nu\lambda_1}$) in $H$.
Let $t_0$ be a time at which $B_H(0, \rho)$ absorbs itself and let $B^0$ be the closure of $S(t_0)B_H(0, \rho)$ in $H$. 
If $v_0\in B^0$ then $v(t)\in B^0$ for all $t\geq 0$. Thus, (A2) holds true. (We need $B^0$ to be closed in $H$ to be able to satisfy assumption (A5) below).

\vspace{0.3cm}

{\it Assumption} (A3). Each $l$-trajectory has among all solutions a unique continuation.

\vspace{0.3cm}

We recall that by the $l$-trajectory we mean any solution on the time interval $[0, l]$, and the unique continuation 
means that from an end point of an $l$-trajectory there starts at most one solution. 

In our case (A3) is satisfied as the solutions are unique (Theorem \ref{thm3.2}). 

\vspace{0.3cm}

{\it Assumption} (A4). For all $t>0$, $L_t: X_l \to X_l$ is continuous on ${\cal B}_0^l$.

\vspace{0.3cm}

${\cal B}_0^l$ is defined as the set of all $l$-trajectories starting at any point of $B^0$ from (A2), and 
$X_l=L^2(0,l;X)$. The semigroup $\{L_t: t\geq 0\}$ acts on the set of $l$-trajectories as the shifts operators: $\{L_t\chi\}(\tau) = v(t+\tau)$ 
for $0\leq \tau \leq l$, where $v$ is the unique solution on $[0, l+\tau]$ such that $v|_{[0,l]}=\chi$.

\vspace{0.3cm}

In our case, $X = H$, hence $X_l=H_l=L^2(0,l;H)$. In view of inequality (\ref{ineq:333}) the map $S(t):B^0 \to B^0$ is Lipschitz continuous for every $t>0$, and we have, for any two $\chi_1, \chi_2$ in ${\cal B}_0^l$,
\begin{eqnarray} \label{eqn:va4.14}
     \int_0^l|L_t\chi_1(s) - L_t\chi_2(s)|^2ds \leq C^2(t) \int_0^l|\chi_1(s) - \chi_2(s)|^2ds.
\end{eqnarray}
Thus, (A4) holds true.

\vspace{0.3cm}

{\it Assumption} (A5). For some $\tau > 0$, the closure in $X_l$ of the set $L_\tau ({\cal B}_0^l)$ is included in ${\cal B}_0^l$.

\vspace{0.3cm}

As $L_\tau ({\cal B}_0^l)\subset {\cal B}_0^l$, it is enough to check that the set ${\cal B}_0^l$ is closed in $X_l$. We have to prove that if $\{\chi_n\}$ is a sequence in ${\cal B}_0^l$ converging to some $\chi$ in $X_l$ then $\chi$ is also a trajectory and that 
  $\chi(0)\in B^0$.

From Assumption (A1) it follows that the sequence $\{\chi_n\}$ is bounded in $Y_l$ and thus contains a subsequence (relabeled $\{\chi_n\}$) such that
\begin{eqnarray}\label{eqn:va4.15}
    \chi_n\tow \chi \quad {\rm in} \quad L^{2}(0 , l ; V),\quad {\rm and}  \quad
    \dfrac{d \chi_n}{dt}\tow  \dfrac{d \chi}{dt}\quad {\rm in} \quad L^{2}(0 , l ; V').
\end{eqnarray}
Moreover, by the Aubin-Lions lemma,
\begin{eqnarray}\label{eqn:va4.15a}
    \chi_n\to \chi \quad {\rm in} \quad L^{2}(0 , l ; H).
\end{eqnarray}
We have,
\begin{eqnarray*} \label{eqn:va4.16}
     \langle \chi_n'(t), \Theta-\chi_n(t) \rangle\, &+& \,\nu a(\chi_n(t), \Theta- \chi_n(t)) \, \nonumber\\ \nonumber \\
            &+& \, b(\chi_n(t), \chi_n(t), \Theta - \chi_n(t)) 
           \nonumber\\ \nonumber \\
      &+& j(\Theta) - j(\chi_n(t)) \geq  ({\cal L}(\chi_n(t)) \,,\, \Theta- \chi_n(t)).
\end{eqnarray*}
We multiply both sides by a nonnegative smooth function $\eta=\eta(t)$ with support in the interval $(0, l)$ and integrate with respect to $t$ in this interval. We shall prove that taking $\lim \inf_{n\to\infty}$ of both sides and using (\ref{eqn:va4.15}) and  (\ref{eqn:va4.15a}), we obtain
\begin{eqnarray} \label{eqn:va4.17}
     \int_0^l\langle\chi'(t), \Theta -\chi(t)\rangle\eta(t)dt &+&\nu\int_0^l a(\chi(t) , \Theta- \chi(t))\eta(t)dt \nonumber\\ \nonumber \\
            &+& \int_0^l b(\chi(t) , \chi(t), \Theta - \chi(t))\eta(t)dt  
           \nonumber\\ \nonumber \\
      &+&\int_0^l j(\Theta)\eta(t)dt - \int_0^lj(\chi(t))\eta(t)dt   \nonumber\\ \nonumber \\
      &\geq&\int_0^l({\cal L}(\chi(t)) ,\Theta- \chi(t))\eta(t)dt.
\end{eqnarray}
First we shall show (without using the convexity argument) that
\begin{eqnarray} \label{eqn:va4.18}
    \lim_{n\to\infty} \int_0^lj(\chi_n(t))\eta(t)dt  =  \int_0^lj(\chi(t))\eta(t)dt.
\end{eqnarray}
It is known (cf., e.g., \cite{necas-1967}) that for every $\varepsilon >0$ there exists $C_\varepsilon >0$ such that for all $v\in W^{1,2}(\Omega)$,
\begin{eqnarray*} \label{eqn:va4.19}
    ||\gamma(v)||_{L^2(\partial\Omega)} \leq \varepsilon||\nabla v||_{L^2(\Omega)} + C_\varepsilon||v||_{L^2(\Omega)}.
\end{eqnarray*}
\noindent We have thus
\begin{eqnarray*} \label{eqn:va4.20}
    ||\gamma(\chi_n)-\gamma(\chi)||_{L^2(\Gamma_0)}^2 \leq \varepsilon|| \chi_n - \chi||^2 
    + C'_\varepsilon|\chi_n-\chi|^2
\end{eqnarray*}
and 
\begin{eqnarray*} \label{eqn:va4.21}
    \int_0^l ||\gamma(\chi_n)-\gamma(\chi)||_{L^2(\Gamma_0)}^2dt \leq 
    \varepsilon\int_0^l|| \chi_n - \chi||^2dt 
    + C'_\varepsilon\int_0^l|\chi_n-\chi|^2dt.
\end{eqnarray*}
As, in view of (\ref{eqn:va4.15}), there exists $M>0$ such that for all $n$,
\begin{eqnarray*} \label{eqn:va4.22}
    \int_0^l||\chi_n(t) - \chi(t)||^2dt \leq M,
\end{eqnarray*}
we obtain, using (\ref{eqn:va4.15a}),
\begin{eqnarray*} \label{eqn:va4.23}
    \lim\sup_{n\to\infty}\int_0^l ||\gamma(\chi_n)-\gamma(\chi)||_{L^2(\Gamma_0)}^2dt \leq \varepsilon M. 
\end{eqnarray*}
Now, as $\varepsilon$ is any positive number, we obtain
\begin{eqnarray} \label{eqn:va4.24}
    \lim_{n\to\infty}\int_0^l ||\gamma(\chi_n)-\gamma(\chi)||_{L^2(\Gamma_0)}^2dt = 0. 
\end{eqnarray}
From (\ref{eqn:va4.24}), (\ref{eqn:va4.18}) easily follows.

We have also
\begin{eqnarray*} \label{eqn:va4.25}
    \lim_{n\to\infty}\int_0^l \langle\chi_n',\chi_n\rangle\eta(t)dt &=& 
    \lim_{n\to\infty}\int_0^l \frac{1}{2}\frac{d}{dt}|\chi_n|^2\eta(t)dt \nonumber \\
    &=&-\lim_{n\to\infty}\int_0^l \frac{1}{2}|\chi_n|^2\eta(t)'dt = -\int_0^l \frac{1}{2}|\chi|^2\eta(t)'dt \nonumber \\
    &=& \int_0^l \langle\chi',\chi\rangle\eta(t)dt.
\end{eqnarray*}
In view of (\ref{eqn:va4.15}) and  (\ref{eqn:va4.15a}), there are no problems to get the other terms in 
(\ref{eqn:va4.17}) and finally, (\ref{eqn:va4.17}) itself. As inequality (\ref{eqn:va4.17}) is equivalent to inequality
\begin{eqnarray*} \label{eqn:va4.26}
     \langle\chi(t)', \Theta-\chi \rangle\, + \, \nu a(\chi(t), \Theta- \chi(t)) 
            &+& b(\chi(t) ,\chi(t), \Theta - \chi(t))  \\
           &+& j(\Theta) - j(\chi(t)) \geq  ({\cal L}(\chi(t)), \Theta- \chi(t)) \nonumber\\ \nonumber 
\end{eqnarray*}
satisfied for almost all $t\in (0,l)$, $\chi$ is a solution with $\chi(0)$ in $H$. To end the proof we have to 
show that $\chi(0)$ belongs to the positively absorbing set $B^0$.  We have $\chi_n(t)\in B^0$ for all $t\in [0,l]$ and, by 
(\ref{eqn:va4.15a}), for a subsequence, $\chi_n(t) \to \chi(t)$ for almost all $t\in (0,l)$. As $B^0$ is closed, $\chi(t)\in B^0$
for almost all $t\in (0,l)$. Now, from the continuity of $\chi:[0,l] \to H$ and the closedness of $B^0$ it follows that
$\chi(0)$ is in $B^0$.

Thus, assumption (A5) holds.

\vspace{0.3cm}

{\it Assumption} (A6).  There exists a space $W_l$ such that $W_l\subset X_l$ with compact embedding, and $\tau >0$ such that $L_\tau:X_l\to W_l$ is Lipschitz continuous on ${\cal B}_l^1$ - the closure in $X_l$ 
of $L_\tau({\cal B}_l^0)$.

\vspace{0.3cm}

Define,
\begin{eqnarray*} \label{eqn:va4.27}
     W_l = \{u: u\in L^2(0,l;V), u'\in L^1(0,l;U')\}.
\end{eqnarray*}
where $U=\{\psi\in V: \psi = 0 \,\,\, at \,\,\,\Gamma_0 \}$.
We have, $W_l\subset X_l$, with compact embedding and we shall prove that $L_l:X_l\to W_l$ is Lipschitz continuous on ${\cal B}_l^1$.

Let $w$ and $v$ be two solutions of Problem~\ref{pr.2.1} starting from $B^0$ and let $u=w-v$. Then, cf. (\ref{eqn:er2.14e}),
\begin{equation} \label{eqn:er2.14e-bis}
           \frac{d}{dt}|{u(t)}|^{2} + \frac{\nu}{2}
            \|{u(t)}\|^{2} \leq \frac{2}{\nu}C(\Om)^4\|w(t)\|^2 |u(t)|^2. \nonumber
\end{equation}
Take $s\in (0,l)$ and integrate this inequality over $\tau\in (s,2l)$ to get
\begin{equation} \label{eqn:va4.28}
           |{u(2l)}|^{2} + \frac{\nu}{2}\int_s^{2l} \|{u(\tau)}\|^{2}d\tau 
                        \leq \frac{2}{\nu}C(\Om)^4\int_s^{2l}\|u(\tau)\|^2 |u(\tau)|^2d\tau
                         + |u(s)|^2.
\end{equation}
From (\ref{ineq:333}) we conclude that
\begin{eqnarray*} \label{ineq:333-bis}
     |u(\tau)|^2 \leq C_1 |u(s)|^2
\end{eqnarray*}
for $\tau\in (s,2l)$ and from (\ref{eqn:va4.5}) we have
\begin{eqnarray*}
     \int_s^{2l} \|{u(\tau)}\|^{2}d\tau \leq \int_0^{2l}(||v(\tau)||^2 + ||w(\tau)||^2)d\tau \leq \frac{1}{\nu}(|v(0)|^2+|w(0)|^2+8lF),
\end{eqnarray*}
whence
\begin{equation} \label{eqn:va4.29}
           \int_s^{2l} \|{u(\tau)}\|^{2}d\tau 
                        \leq C_2   \nonumber
\end{equation}
uniformly for  $w(s),v(s)\in B^0$. Therefore, from (\ref{eqn:va4.28}) we obtain
\begin{equation} \label{eqn:va4.30}
           \int_l^{2l} \|{u(\tau)}\|^{2}d\tau \leq C_3|u(s)|^2.  \nonumber
\end{equation}
Integrating over $s\in (0,l)$ we obtain
\begin{equation} \label{eqn:va4.31}
           \int_l^{2l} \|{u(\tau)}\|^{2}d\tau \leq \frac{C_3}{l}\int_0^l|u(s)|^2 ds \nonumber
\end{equation}
and therefore
\begin{equation} \label{eqn:va4.32}
   ||L_l\chi_1 -L_l\chi_2||_{L^2(0,l;V)} \leq \sqrt{\frac{C_3}{l}} ||\chi_1 - \chi_2||_{L^2(0,l;H)}
\end{equation}
for any $\chi_1, \chi_2$ in ${\cal B}_l^0$.

In order to prove that
\begin{equation} \label{eqn:va4.33}
           ||(L_l\chi_1 -L_l\chi_2)'||_{L^1(0,l;U')} \leq C ||\chi_1 - \chi_2||_{L^2(0,l;H)} \nonumber
\end{equation}
for some $C>0$ it is sufficient, in view of (\ref{eqn:va4.32}), to prove that 
\begin{equation} \label{eqn:va4.34}
           ||(\chi_1 - \chi_2)'||_{L^1(0,l;U')} \leq C' ||\chi_1 - \chi_2||_{L^2(0,l;V)}  
\end{equation}
with some $C'>0$.
We have,
\begin{eqnarray*} \label{eqn:er2.16-bis}
     \langle v' ,  \Theta-v\rangle\, + \,\nu a(v, \Theta- v) \,
            + \, b(v , v , \Theta - v)  \,
      + j(\Theta) - j( v) \geq  ({\cal L}(v) , \Theta- v)
\end{eqnarray*} 
and
\begin{eqnarray*} \label{eqn:er2.16-bis-bis}
     \langle w' ,  \Theta-w \rangle\, + \,\nu a(w, \Theta- w) \,
            + \, b(w , w , \Theta - w)  \,
      + j(\Theta) - j( w) \geq  ({\cal L}(w) , \Theta- w).
\end{eqnarray*}

Set $\Theta= v-\psi$ in the first inequality and $\Theta=w+\psi$ in the second one, 
where $\psi\in U$, $||\psi||\leq 1$, and add thus obtained inequalities to get
\begin{eqnarray} \label{eqn:va4.35}
     \langle u', -\psi \rangle \, \leq {\cal N}(u,w,v;\psi),
\end{eqnarray}

where
\begin{eqnarray*} \label{eqn:va4.36}
     {\cal N}(u,w,v;\psi) = b(\xi,u,\psi) + b(u,\xi,\psi) + \nu a(u,\psi) + b(w,u,\psi) + b(u,v,\psi).
\end{eqnarray*}
Estimating the right hand side of (\ref{eqn:va4.35}) we get
\begin{eqnarray*} \label{eqn:va4.38}
     \langle u', -\psi \rangle\,\, \leq \,\,C''(1+||v| + ||w||)||u||\,||\psi||,
\end{eqnarray*}
whence
\begin{eqnarray*} \label{eqn:va4.39}
    ||u'(t)||_{U'} = \sup\{ \langle u'(t), -\psi \rangle\,: ||\psi||\leq 1\} \,\, \leq \,\,C''(1+ ||v|| + ||w(t)||)||u(t)||.
\end{eqnarray*}
At last, integration over $t\in (0,l)$ gives
\begin{eqnarray*} \label{eqn:va4.40}
    \int_0^l||u'(t)||_{U'}dt  \leq \,\,C'''(\int_0^l||u(t)||^2dt)^{1/2},
\end{eqnarray*}
with
\begin{eqnarray*} \label{eqn:va4.41}
    C'''= C_0(\int_0^l (1+||v(t)||^2 + ||w(t)||^2)dt)^{1/2}, \quad C_0>0,
\end{eqnarray*}
uniformly for trajectories starting from $B^0$. This proves (\ref{eqn:va4.34}) and ends the proof of the Lipschitz continuity of the map $L_l:X_l\to W_l$.
 Assumption (A6) holds true.

\vspace{0.3cm}



\vspace{0.3cm}

{\it Assumption} (A8). The map $e:X_l \to X$ is H\"older-continuous on ${\cal B}^1_l$.

\vspace{0.3cm}

\noindent (A8) follows directly from the Lipschitz continuity of the map $e:X_l \to X$, $e(\chi)=\chi(l)$. 
To check the latter, 
let $w,v$ be two solutions as above, starting from $B^0$. From (\ref{ineq:333}) we have, in particular,
\begin{eqnarray*} \label{eqn:va4.42}
     |w(l) - v(l)| \leq C |w(\tau) - v(\tau)|
\end{eqnarray*}
for $\tau\in (0,l)$. Integrating this inequality in $\tau$ on the interval $(0,l)$ we obtain
\begin{eqnarray*} 
     |w(l) - v(l)| \leq \frac{C}{\sqrt{l}} ||w - v||_{L^2(0,l;H)}.
\end{eqnarray*}

\vspace{0.3cm}
\noindent This ends the proof of Theorem \ref{theorem-main}. 
\end{proof}

\renewcommand{\theequation}{\arabic{section}.\arabic{equation}}
\setcounter{equation}{0}
\section{Existence of an exponential attractor} \label{exponential}
In this section we prove the main theorem of this paper. 

\begin{theorem} \label{thm-exponential}
There exists an exponential attractor for the semigroup $\{S(t)\}_{t\geq 0}$ associated with Problem \ref{pr.2.1}.
\end{theorem}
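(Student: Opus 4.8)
The plan is to invoke Theorem \ref{existence-exponential}, which guarantees an exponential attractor in a separable Hilbert space $X$ once assumptions (A1)--(A6) and (A8)--(A10) are verified. In the proof of Theorem \ref{theorem-main} we already set $X=H$ (a separable Hilbert space) and checked (A1)--(A6) and (A8). Thus the work remaining reduces to establishing the two new assumptions (A9) and (A10), which concern the time-regularity of the shift semigroup $\{L_t\}$ acting on $l$-trajectories. I would state this reduction explicitly at the outset, so that the bulk of the section can concentrate on (A9) and (A10).

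For (A9) I would argue that the Lipschitz estimate for $L_l$ already obtained in (A6), together with the Lipschitz continuity of the solution map in $H$ from Theorem \ref{thm3.2} (inequality (\ref{ineq:333})), extends to all $L_t$ with a constant that can be taken uniform for $t$ in a bounded interval $[0,\tau]$. The key point is that the constant $C$ in (\ref{ineq:333}) is uniform for $t,\tau$ ranging over a fixed bounded interval and for initial data in the bounded absorbing set $B^0$; combining this with the integral estimate (\ref{eqn:va4.5}) on $\int_0^{\cdot}\|v\|^2$, which is likewise uniform over $B^0$, yields a bound on $\int_0^l |L_t\chi_1(s)-L_t\chi_2(s)|^2\,ds$ by $C^2(\tau)\int_0^l|\chi_1(s)-\chi_2(s)|^2\,ds$ valid simultaneously for all $t\in[0,\tau]$. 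So (A9) follows largely by bookkeeping on constants already in hand.

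The substantive new ingredient is (A10), the H\"older (indeed, I expect Lipschitz, $\beta=1$) continuity in time of the map $t\mapsto L_t\chi$ in the $X_l=L^2(0,l;H)$ norm. The natural route is to bound
\begin{equation*}
  \|L_{t_1}\chi - L_{t_2}\chi\|_{X_l}^2 = \int_0^l |v(t_1+\tau)-v(t_2+\tau)|^2\,d\tau,
\end{equation*}
and for this one controls $|v(t+h)-v(t)|$ by $\int_t^{t+h}|v'(s)|\,ds$ once $v'$ is known to lie in $L^2$ (or at least $L^1$) \emph{in $H$}, not merely in $V'$. This is exactly the additional regularity the introduction promised to supply: the estimates of (A1) give $v'\in L^2(0,T;V')$, which is \emph{not} by itself enough to measure time-increments in the $H$-norm. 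The hard part will therefore be obtaining an a priori estimate that places $v'$ (or a time-difference of $v$) in a space strong enough to produce the $H$-valued modulus of continuity. I would derive this from a new a priori estimate on the solutions inside the absorbing set $B^0$: after the system has entered its absorbing ball, the trajectory enjoys improved regularity, and one can estimate the $H$-norm of a time increment by testing the variational inequality (\ref{eqn:er2.16}) for $v(t+h)$ and for $v(t)$ against suitable competitors, subtracting, and exploiting that the nondifferentiable functional $j$ drops out or contributes with a favourable sign. Once $|v(t_1+\tau)-v(t_2+\tau)|\le c\,|t_1-t_2|^\beta$ is established uniformly over ${\cal B}_l^1$, integrating in $\tau$ over $(0,l)$ gives (A10) directly, and the cited Theorem \ref{existence-exponential} then yields the exponential attractor ${\cal M}$ in $H$.
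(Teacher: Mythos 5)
Your reduction to (A9)--(A10) and your treatment of (A9) coincide with the paper's: (A9) is just the observation that the Lipschitz estimate (\ref{eqn:va4.14}) holds with a constant uniform for $t\in[0,\tau]$. The problem is (A10). The mechanism you propose for it --- writing (\ref{eqn:er2.16}) for $v(\cdot)$ with competitor $v(\cdot+h)$, for $v(\cdot+h)$ with competitor $v(\cdot)$, adding, and letting the $j$-terms cancel --- is precisely the uniqueness/continuous-dependence computation of Theorem \ref{thm3.2} applied to the two solutions $v$ and $v(\cdot+h)$ of the same autonomous problem. It yields, for $w_h(s)=v(s+h)-v(s)$, a differential inequality of the form $\frac{d}{ds}|w_h(s)|^2 \leq C\bigl(1+\|v(s+h)\|^2\bigr)|w_h(s)|^2$, hence by Gronwall $|w_h(t)|\leq C\,|w_h(0)|$. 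This \emph{propagates} smallness of time increments forward in time but cannot \emph{create} the rate $|w_h(0)|\leq c\,h^{\beta}$: for data that are merely in $H$ (and $B^0$, hence ${\cal B}_l^1$, is only bounded and closed in $H$) the solution is continuous into $H$ with no quantitative modulus at the starting time. So your route to the pointwise bound $|v(t_1+\tau)-v(t_2+\tau)|\leq c|t_1-t_2|^{\beta}$ is circular, and your expectation $\beta=1$ is equivalent to $v'\in L^{\infty}_{loc}(H)$, which is exactly the nontrivial smoothing estimate that is missing from the argument.

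The paper supplies that estimate by a genuinely different device: it goes back to the regularized problem (\ref{e32})--(\ref{e320}), in which $j_{\delta}$ is G\^ateaux differentiable, so that the \emph{equation} can be differentiated in time and tested with $\Theta=v_t$; the monotonicity property $(\frac{d}{dt}j'_{\delta}(v),v_t)\geq 0$ makes the boundary term harmless, and the lack of initial regularity is compensated by the time-weighted quantity $y(t)=t^2|v_t|^2+t\nu\|v\|^2+2tj_{\delta}(v)$ (the Ladyzhenskaya--Seregin scheme). This gives $v_t$ bounded in $L^{\infty}(\eta,T;H)$ uniformly in $\delta$ and in initial data from bounded sets of $H$; the bound survives $\delta\to0$, holds on ${\cal B}_l^1$ (whose elements have already evolved for time $l$), and then (A10) follows with $\beta=1$ by writing $w_h$ as the integral of $v'$. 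If you wish to avoid this machinery, there is a cheaper repair that your proposal does not mention: from $v\in L^2(V)$, $v'\in L^2(V')$ (already uniform over $B^0$ by (A1)--(A2)) and the duality bound $|w|^2\leq \|w\|_{V'}\|w\|_{V}$ one gets $\int_0^l|w_h(s)|^2ds\leq C h^{1/2}$, i.e.\ (A10) with $\beta=1/4$, which suffices for Theorem \ref{existence-exponential}. But one of these two ingredients --- the paper's smoothing estimate or such an interpolation bound --- must be added; the cross-testing cancellation alone does not close the argument.
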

\begin{proof}
In view of Theorem \ref{existence-exponential} and the considerations in the previous section it suffices to check conditions (A9) and (A10). The first one follows immediately from inequality (\ref{eqn:va4.14}). Thus it remains to prove that 
\begin{eqnarray} \label{exponential-problem2}
      ||L_{t_1}\chi - L_{t_2}\chi||_{X_l} \leq c |t_1 -t_2|^\beta
\end{eqnarray}
holds for all $\tau>0$, $0\leq t_2\leq t_1 \leq \tau$, $\chi\in {\cal B}_l^1$, some $c>0$ and some $\beta \in(0,1]$, where in our case, $X_l=H_l$.  

To obtain (\ref{exponential-problem2}) it suffices to know that $\chi'$, the time derivatives 
of $\chi\in {\cal B}_l^1$, are uniformly bounded in $L^q(0,l; H)$ for some $1< q \leq \infty$, 
\cite{malek-prazak-2002}. In fact, we have then
\begin{eqnarray*} \label{eqn:exponential1}
      ||L_{t_1}\chi(s) - L_{t_2}\chi(s)||_{H} &=& ||u(t_1+s) - u(t_2+s)||_H \\
      &=& ||\int_{t_2+s}^{t_1+s}u'(\eta)d\eta || \\
      &\leq & |t_1 -t_2|^{1-\frac{1}{q}}||u'||_{L^q(t_2+s,t_1+s;H)}.
\end{eqnarray*}
and integration with respect to $s$ in the interval $[0,l]$ gives (\ref{exponential-problem2}) with $c$ depending on $\tau$ and $l$.

We shall prove that there exists $M>0$ such that
\begin{eqnarray*} \label{eqn:exponential2}
      ||\chi'||_{L^\infty(0,l;H)} \leq M \quad \mbox{for all} \quad \chi\in {\cal B}_l^1.
\end{eqnarray*}
The formal a priori estimates that follow can be performed on the smooth in the time variable Galerkin approximations $v^n_\delta(t)$, $n=1,2,3,...$, of the regularized problem (\ref{e32})-(\ref{e320}), as in 
\cite{Lady-Seregin-1995}. The obtained estimates are preserved by solutions $v_\delta(t)$ of 
problem (\ref{e32})-(\ref{e320}), with bounds independent of $\delta$ when $\delta\to 0$, and in the end, by solutions
$v(t)$ of Problem \ref{pr.2.1}.

Let us consider the solution $v=v(t)$ of problem (\ref{e32})-(\ref{e320}) (we drop the subscript $\delta$ for short),
\begin{eqnarray}\label{eqn;exp3}
     (\frac{d v(t)}{dt} ,\Theta) + \nu a(v(t), \Theta) &+&
      b(v(t), v(t),\Theta)
      + (j'_{\delta}(v(t)), \Theta ) \nonumber\\
     &=&
     - \nu a(\xi, \Theta)
     -  b(\xi , v(t) , \Theta)- b(v(t) ,  \xi , \Theta)
\end{eqnarray}
\noindent with initial condition
\begin{eqnarray*}\label{eqn:exp4}
v(0)= v_{0}.
\end{eqnarray*}
Our aim is to derive, following the method used in \cite{Lady-Seregin-1995}, two a priori estimates which yield (\ref{eqn:exponential2}). To get the first one, set 
$\Theta = v_t$ ($v_t=v'$) in (\ref{eqn;exp3}). We obtain
\begin{eqnarray}\label{eqn:exp5}
    |v_t|^2 + \frac{d}{dt}\frac{\nu}{2}||v||^2 + \frac{d}{dt}j_\delta(v) = ({\cal L}(v), v_t) - b(v,v,v_t).
\end{eqnarray}
Using the Ladyzhenskaya inequality (\ref{lady}) we have
\begin{eqnarray}\label{eqn:exp6}
    ({\cal L}(v), v_t) \leq c_1(||v_t|| + ||v|||v_t|^{1/2}||v_t||^{1/2} + |v|^{1/2}||v||^{1/2}|v_t|^{1/2}||v_t||^{1/2}).
\end{eqnarray}
Now, we differentiate (\ref{eqn;exp3}) with respect to the time variable and set $\Theta = v_t$, to get
\begin{eqnarray}\label{eqn:exp7}
    \frac{1}{2}\frac{d}{dt}|v_t|^2 + \nu||v_t||^2 \leq -b(v_t, \xi, v_t) - b(v_t, v, v_t),
\end{eqnarray}
as $b(\xi, v_t, v_t)=0$, $b(v, v_t, v_t)=0$, and, cf., \cite{Duv72}, \cite{papa-1985},
\begin{eqnarray*}\label{eqn:exp8}
    (\frac{d}{dt}j'_\delta(v), v_t) \geq 0.
\end{eqnarray*}
Using Lemma \ref{lemma3.1} and the Ladyzhenskaya inequality (\ref{lady}) to estimate the right hand side of (\ref{eqn:exp7}) we obtain
\begin{eqnarray}\label{eqn:exp9}
    \frac{d}{dt}|v_t|^2 + \nu||v_t||^2 \leq c_2||v||^2|v_t|^2.
\end{eqnarray}
Now, we multiply (\ref{eqn:exp9}) by $t^2$ to get
\begin{eqnarray}\label{eqn:exp10}
    \frac{d}{dt}(t^2|v_t|^2) + t^2\nu||v_t||^2 \leq c_2||v||^2(t^2|v_t|^2) + 2t|v_t|^2.
\end{eqnarray}
To get rid of the last term on the right hand side we add to (\ref{eqn:exp10}) equation (\ref{eqn:exp5}) multiplied by
$2t$. After simple calculations and using (\ref{eqn:exp6}) we obtain
\begin{eqnarray}\label{eqn:exp11}
    \frac{d}{dt}(t^2|v_t|^2 &+& t\nu||v||^2 + 2tj_\delta(v)) + t^2\nu||v_t||^2 \\
    &\leq& 2tc_1(||v_t|| + ||v|||v_t|^{1/2}||v_t||^{1/2} + |v|^{1/2}||v||^{1/2}|v_t|^{1/2}||v_t||^{1/2}) \nonumber \\
    &+& c_2||v||^2(t^2|v_t|^2) + \nu||v||^2 + 2j_\delta(v) \nonumber \\
    &+& c_3|v|^{1/2}||v||^{3/2}(t|v_t|)^{1/2}(t||v_t||)^{1/2}.
\nonumber
\end{eqnarray}
Define $y = t^2|v_t|^2 + t\nu||v||^2 + 2tj_\delta(v)$.
Using the Young inequality to the right hand side of 
(\ref{eqn:exp11}) and observing that $j_\delta(v) \leq C(||v||^2+1)$ for $0<\delta \leq 1$, we obtain at last the inequality of the form
\begin{eqnarray*}\label{eqn:exp12}
    \frac{d}{dt}y(t) + \frac{t^2\nu}{2}||v_t||^2 \leq C_1(t)y(t) + C_2(t),
\end{eqnarray*}
where the coefficients $C_i(\cdot)$, $i=1,2$, are locally integrable and do not depend on $\delta$. They are also independent of the initial conditions for $v$ in a given bounded sets in $H$. 
This proves that the time derivative of solutions of the regularized problems is uniformly bounded with respect to $\delta$ in
$L^\infty(\eta, T; H)\cap L^2(\eta, T; V)$ for all intervals $[\eta, T]$, $0<\eta < T$. As a consequence, this property holds for all $\chi\in {\cal B}_l^1$. In view of the above considerations this ends the proof of the existence of an exponential attractor.
\end{proof}
\renewcommand{\theequation}{\arabic{section}.\arabic{equation}}
\setcounter{equation}{0}
\section{Conclusions and some open problems} \label{conclusions}

In this paper we proved the existence of the global attractor of a finite fractal dimension and also of an exponential attractor for a Navier-Stokes flow with Tresca's boundary condition appearing in the theory of lubrication. In the end we would like to mention some related problems.

First, there is a question of the existence of global and exponential attractors for other contact problems with subdifferentiable boundary conditions. As concerns exponential attractors, the main difficulty seems to be produced by assumption (A10). Usually (\ref{exponential-problem}) follows from some better property of the time derivative of the solution, e.g., it easily follows if we know that $u'\in L^q(0,T;X)$ for $T>0$ and some $q>1$.
However, such a property is not known to hold for many contact problems.
On the other hand, for numerous quasistatic and dynamical contact problems this property is naturally satisfied 
\cite{millor-sofonea-telega-2010, Duv72}, however, for the associated semigroups acting in some more regular phase spaces $Y$ of initial conditions (with $Y$ compactly embedded in $X$). In this situation it seems important to study further regularity properties of solutions of contact problems to have
   $u'\in L^q(0,T;X)$ for $T>0$ and some $q>1$ for solutions with initial data in $X$ (and not in $Y$).
   
For some visco-plastic flows (e.g., Bingham flows) governed by variational inequalities (however, only with homogeneous or periodic boundary conditions) the regularity problem in question was solved e.g. in \cite{Lady-Seregin-1995, Seregin-1994} and used to study the time asymptotics of solutions.

In this context, there are other important open problems, namely these of the time asymptotics of solutions of the Navier-Stokes, visco-plastic and other fluid models governed by evolution variational or even hemivariational inequalities (cf., e.g., \cite{Migorski-Ochal-2007}) that take into account involved boundary conditions coming from a variety of applications in mechanics. 

\vspace{0.3cm}

{\bf Acknowledgements.} The author would like to thank very much the unknown referee for helpful comments which allowed to improve the paper in some important points.

\addcontentsline{toc}{chapter}{Bibliographie}

\end{document}